\providecommand{\U}[1]{\protect\rule{.1in}{.1in}}
\def\BibTeX{{\rm B\kern-.05em{\sc i\kern-.025em b}\kern-.08em
T\kern-.1667em\lower.7ex\hbox{E}\kern-.125emX}}
\newcommand{\C}{\mathbb{C}}
\newcommand{\calN}{\mathcal{N}}
\newcommand{\calX}{\mathcal{X}}
\newcommand{\calY}{\mathcal{Y}}
\newcommand{\abs}[1]{\left\vert #1 \right\vert}
\newcommand{\norm}[1]{\Vert #1 \Vert}
\newcommand{\set}[1]{\left\lbrace #1\right\rbrace}
\newcommand{\sse}{\subseteq}
\newcommand{\sprod}[1]{\left\langle #1 \right\rangle}
\newcommand{\prb}[1]{\mathbb{P}\left( #1 \right)}
\newcommand{\geqsim}{\gtrsim}
\DeclareMathOperator{\supp}{supp}
\DeclareMathOperator{\id}{id}
\newcommand{\argmin}{\mathop{\mathrm{argmin}}}
\newtheorem{lem}{Lemma}
\newtheorem{prop}[lem]{Proposition}
\newtheorem{theo}[lem]{Theorem}
\newtheorem*{ass}{Attack model}
\newtheorem{rem}{Remark}
\newtheorem*{inftheo}{Main result (informal)}
\newenvironment{2coltable}{\begin{table*}[h]}{\end{table*}}
\newenvironment{figure2}{\begin{figure*}[h]}{\end{figure*}}
\newenvironment{algorithm2}{\begin{algorithm*}[ptb]}{\end{algorithm*}}
\theoremstyle{definition}
\numberwithin{lem}{section}
\let\bbordermatrix\bordermatrix
\patchcmd{\bbordermatrix}{8.75}{4.75}{}{}
\patchcmd{\bbordermatrix}{\left(}{\left[}{}{}
\patchcmd{\bbordermatrix}{\right)}{\right]}{}{}
\definecolor{axel}{rgb}{0,.6,.8}
\definecolor{gerhard}{rgb}{1,.8,.1}
\begin{document}

\title{Perfectly Secure Key Agreement Over a Full Duplex Wireless Channel}


\author{
\IEEEauthorblockN{Gerhard Wunder}
\IEEEauthorblockA{\textit{Freie Universität Berlin} \\
Berlin, Germany \\
g.wunder@fu-berlin.de \\}
\thanks{Parts of these results were presented at the IEEE CommNet 2023.}
\and
\IEEEauthorblockN{Axel Flinth}
\IEEEauthorblockA{\textit{Umeå University} \\
Umeå, Sweden \\
axel.flinth@umu.se\\}
\and
\IEEEauthorblockN{Daniel Becker}
\IEEEauthorblockA{\textit{Freie Universität Berlin} \\
Berlin, Germany \\
daniel.becker@fu-berlin.de \\}
\and
\IEEEauthorblockN{Benedikt Groß}
\IEEEauthorblockA{\textit{Freie Universität Berlin} \\
Berlin, Germany \\
benedikt.gross@fu-berlin.de}
}

\maketitle

\begin{abstract}
Secret key generation (SKG) between authenticated devices is a pivotal task
for secure communications. Diffie-Hellman (DH) is de-facto standard but not
post-quantum secure. In this paper, we shall invent and analyze a new security
primitive that is specifically designed for WPAN. For WPAN, wireless
channel-based SKG has been proposed but was not widely deployed due to its
critical dependence on the channel's entropy which is uncontrollable. We
formulate a different approach: We still exploit channel properties but mainly
hinge on the reciprocity of the wireless channel and not on the channel's
entropy. The radio advantage comes from the use of full duplex communication.
We show that in this situation both legitimate parties can agree on a common
secret key even without ever probing the channel at all. At the core is a new
bisparse blind deconvolution scheme for which we prove correctness and
information-theoretic, i.e. perfect, security. We show that, ultimately, a
secret key can be extracted and give a lower bound for the number of secret
key bits which is then verified by experiments.
\end{abstract}

\begin{IEEEkeywords}
Physical layer security, Diffie-Hellman key exchange, wireless channel based secret key generation, compressive security, blind deconvolution
\end{IEEEkeywords}

\section{Introduction}

Diffie-Hellman (DH) is one of the most used key exchange algorithms. However,
with the aspiring threat of quantum algorithms that solve the underlying
mathematical problems in polynomial time, the need for post-quantum secure key
exchange protocols is evident \cite{NIST2022_PQC}. In this paper, we shall
invent and analyze a new security primitive, termed Full Duplex - Bisparse
Blind Deconvolution, in short FD-BBD, that is specifically designed for
wireless personal area networks (WPANs). One potential application scenario in
WPANs is the so-called \emph{close talker scenario}\ \cite{Pierson19,Khan21}
where two legitimate devices (Alice and Bob) come ad hoc together and want to
securely communicate in the presence of a potential eavesdropper (Eve). They
do not know each other but authenticate themselves through their proximity
\cite{Khan21}. For secure communication, a session key is then generated
through FD-BBD. Notably, the near-field channel between Alice and Bob is
reciprocal and even highly frequency-selective \cite{Pierson19_2}. In
far-field, the channels from Alice or Bob to Eve are quite similar when the
legitimate devices are sufficiently close (within centimeters). Both is proved
to give Alice and Bob the critical radio advantage over Eve in FD-BBD.

FD-BBD generally falls in the category of \emph{Physical Layer Security (PLS)}
\cite{Mukherjee16}. While there is no clear definition, PLS genuinely
incorporate the principles of electromagnetic wave propagation into account,
i.e. superposition, channel reciprocity, fading. Principles that surprisingly
uphold not only in far-field but also in the near-field by virtue of Maxwells
equations \cite{Tang2021}. The standard PLS approach is to measure the
reciprocal and exclusive channel $h_{AB}$ between Alice and Bob, each using
probing signals, from which a secret key is then extracted. However, since the
available entropy in $h_{AB}$ is unpredictable and often far too little, this
scheme has not been widely deployed \cite{Zhang16}. In fact, to the authors'
knowledge it has not been implemented in any proprietary system or standard.

\begin{figure2}
\centering\includegraphics[width=1\linewidth]{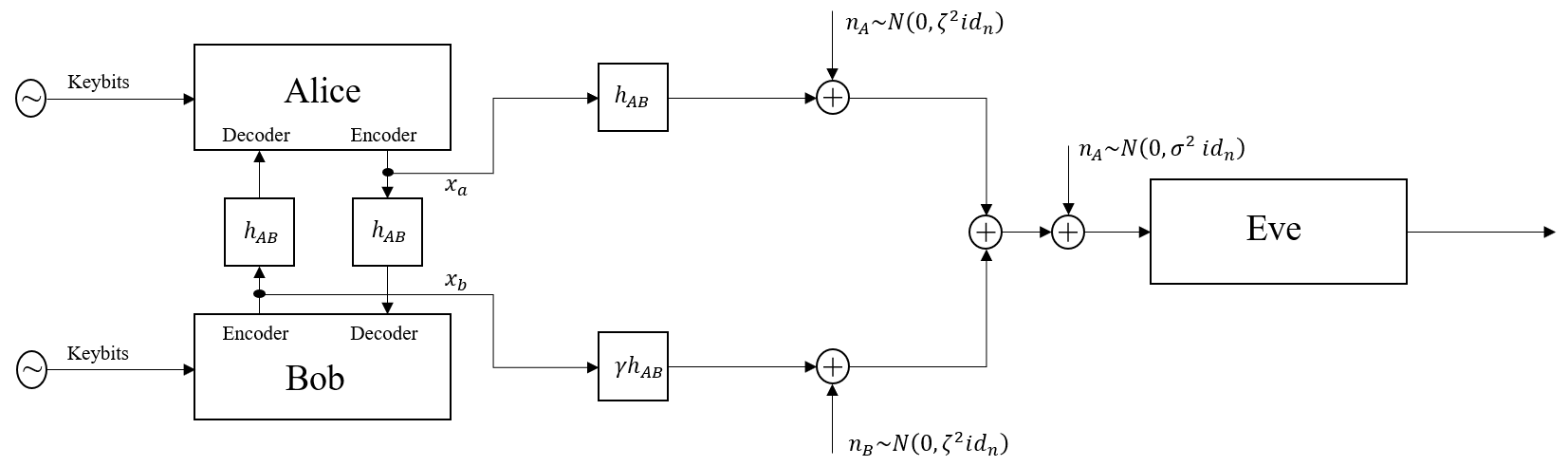}\caption{\emph{The close
talker scenario}: The individual signals $x_{A},x_{B}$ go through the reciprocal channel
$h_{AB}$ for Alice and Bob, while outside some 'trust' region, Eve can only
see the superposition of both through $h_{AE},h_{BE}$ due to full duplex
transmission. Here, $h_{AE}$ and $h_{BE}$ are scaled and noisy versions of each other.} 

\label{closetalker}
\end{figure2}

In sharp contrast to this traditional approach, the way how FD-BBD generates
the secret key is entirely different: The channel $h_{AB}$ is neither the
primary source for its generation, nor is the channel ever explicitly measured
through channel probing at all. Instead, Alice and Bob generate local signals
$\beta_{B},\beta_{A}$ which are on/off signals that carry the information bits
in their random supports $\sigma_{A},\sigma_{B}$. The signals are then mapped
onto a (compressive) linear code and sent through the channel. Again, no
probing signals are needed, just rough synchronization between Alice and Bob
on frame level is required. Then, to actually generate the secret key, we
employ a new \emph{bisparse blind deconvolution} algorithm. The signals do not 
not individually recovered: Instead, the secret key is obtained from the product of
the discrete Fourier transforms (DFTs) of suitable upsampled version of
$h_{AB},\beta_{B},\beta_{A}$. The upsampling is a by-product of the specific
hierarchical way our bisparse blind deconvolution algorithm works. -- Eve, on
the other hand, can only see the superposition of the signals due to the full
duplex mode, so it can only recover the superposed signals. We will spend a
great deal to prove that such superposition does not reveal anything about the
secret key above, which is essentially to show that any possible support
configuration of $\beta_{B}+\beta_{A}$ leads to a different secret (injective
mapping). Ultimately, we prove correctness and information-theoretic, i.e.
perfect, security of the FD-BBD scheme and give a lower bound on the number of
secret key bits. Eventually, to avoid confusion, we would like to emphasize that:

\begin{enumerate}
\item the eavesdropper knows the used linear coding matrix, i.e. it is
\emph{not} kept secret,

\item advanced coding is required since Alice and Bob communicate \emph{ad
hoc} without channel estimation, and, 

\item for the way we create the secret, full duplex alone would be not
sufficient for perfect secrecy.
\end{enumerate}

The main result is now summarized as follows:

\begin{inftheo}
With bisparse blind deconvolution, Alice and Bob can both recover a common
secret of the form
\[
\mathfrak{c}=\widehat{h_{AB}^{\Uparrow}}\cdot\widehat{\beta_{B}^{\uparrow}%
}\cdot\widehat{\beta_{A}^{\uparrow}}\in\mathbb{C}^{\mu n}%
\]
where $\widehat{\cdot}$ denotes the DFT and multiplication is element-wise
from which the secret key is obtained (see below for the definitions of
$\mu,n\in\mathbb{N}$). Herein $h_{AB}^{\Uparrow},\beta_{B}^{\uparrow}%
,\beta_{A}^{\uparrow}$ are suitably upsampled versions of reciprocal channel
and local signals of Alice and Bob, respectively, the latter each with random
supports $\sigma_{A},\sigma_{B}$ of cardinality $k$. The number of information
bits is hence $\log_{2}\binom{2k}{k}$ bits. Moreover, if Eve has access to
$\Sigma_{AB}=\sigma_{A}\cup\sigma_{B}$ the mapping
\[
\sigma_{A}=\supp\beta_{A}\rightarrow\mathfrak{c}\text{\quad or\quad}\sigma
_{B}=\supp\beta_{B}\rightarrow\mathfrak{c}%
\]
is injective.
Consequently, the resulting uncertainty about $\sigma_{A}$ (or $\sigma_{B}$)
is preserved even if the sum set $\Sigma_{AB}$ is known to Eve.
\end{inftheo}

We note that there is some (remote) similarity of the FD-BBD protocol to DH
key exchange by identifying $\widehat{\beta_{B}^{\uparrow}}\triangleq
g^{x},\widehat{\beta_{A}^{\uparrow}}\triangleq g^{y},\widehat{h_{AB}%
^{\Uparrow}}\triangleq1$, where the reverse operation (discrete logarithm
$g^{(.)}$ in a strong cryptographic field using generator $g$) is one-way (but
not perfectly secure) and where exponentiation (like the element-wise
multiplication in DFT domain) is commutative. The main purpose of this paper
is to show that the above mapping is correct and securely encapsulates the key
within the close talker scenario.

\section{Signal model}

\subsection{Signal design and attacker model}

\label{sec:wdh_scheme}

Alice and Bob first generate random signals $\beta_{A}\in\C^{n}$ and
$\beta_{B}\in\C^{n}$ as follows: The $\beta$'s are supported on uniformly
random sets $\sigma\sse [n]:=\{0,...,n-1\}$ of cardinality $k$, with non-zero
values $\alpha_{k}$ drawn according to some fixed distribution (to be defined
later). That is:%
\[
\beta_{A/B}=\sum_{j\in\lbrack k]}\alpha_{j}e_{\sigma_{A/B}\left(  j\right)
},
\]
where for $j\in\lbrack n]$, $e_{j}\in\C^{n}$ denotes the $j$-th unit
vector\footnote{All numbering of set elements starts from 0 to set size minus
one.}. After generating their signals, Alice and Bob both apply a common
public codebook $Q\in\C^{\mu\times n}$ to their respective signals, and
transmit
\[
x_{A}=Q\beta_{A}\in\C^{\mu},\hspace{1cm}x_{B}=Q\beta_{B}\in\C^{\mu}%
\]
simultaneously \emph{in full duplex mode} over a common wireless channel.
Here, $\mu\gg1$ is the signal space dimension. We require $\mu\leq n$, with
$\mu$ typically much smaller than $n$ (compression). The resulting signal is
then%
\begin{align*}
y_{A}  &  =h_{A}\ast Q\beta_{A}+h_{B\rightarrow A}\ast Q\beta_{B}+n_{B}%
\in\C^{\mu}\\
y_{B}  &  =h_{B}\ast Q\beta_{B}+h_{A\rightarrow B}\ast Q\beta_{A}+n_{A}%
\in\C^{\mu}%
\end{align*}
where $h_{A},h_{B}$ denote the self-interferences, $h_{A\rightarrow
B},h_{B\rightarrow A}$ are the wireless channels, and $n_{A},n_{B}$ is the
noise, which we assume to be $\calN(0,\sigma^{2}\id_{\mu})$-distributed. As
discussed, the channel is reciprocal (even in near-field), which means that
$h_{A\rightarrow B}=h_{B\rightarrow A}=h_{AB}$. Moreover, the
self-interference terms are considered small and negligible
\cite{Fritschek2017_CNS} which is a common assumption. With these assumptions
we get
\begin{align*}
y_{A}  &  =h_{AB}\ast Q\beta_{B}+n_{B}\in\C^{\mu}\\
y_{B}  &  =h_{AB}\ast Q\beta_{A}+n_{A}\in\C^{\mu}.
\end{align*}
For the channel $h_{AB}$, we will use a standard random model for the recovery
analysis: It consists of $s$ distinct paths, each with a normally distributed
complex gain and uniformly random position. That is,
\[
h_{AB}=\sum_{k\in\lbrack s]}\gamma_{k}e_{i_{k}}%
\]
for $\{\gamma_{k}\}_{k\in\lbrack s]}$ independent Gaussians and $\{i_{k}%
\}_{k\in\lbrack s]}$ independently distributed on $[\mu]$. Finally, $h_{AB}$
is independent of $\beta_{A}$ and $\beta_{B}$. However, we emphasize that it
is not essential for the security analysis.

Finally, let us include the passive eavesdropper Eve carrying out a \emph{key
recovery attack}. Eve observes
\[
y_{E}=h_{A\rightarrow E}\ast Q\beta_{A}+h_{B\rightarrow E}\ast Q\beta
_{B}+n_{E}\in\C^{\mu}%
\]
where channels $h_{A\rightarrow E},h_{B\rightarrow E}$ are in general
different as well as (typically) independent of $h_{AB}$ and
frequency-selective. In this paper, we will consider attacker model model,
motivated by the described close talker scenario.

\begin{ass}
\label{ass:filter} The eavesdropper's channels $h_{A\rightarrow E}%
,h_{B\rightarrow E}$ are almost equal to the reciprocal channel $h_{AB}$,
subject to a scaling factor. More concretely, there exists a $\gamma>0$ and
vectors $n_{A}$, $n_{B}$ such that (after complex rotation)
\[
h_{A\rightarrow E}=h_{AB}+n_{A},h_{B\rightarrow E}=\gamma h_{AB}+n_{B}.
\]
$n_{A}$ and $n_{B}$ are assumed to have the same support as $h_{AB}$, with
normal Gaussian distributed values. That is, $n_{A},n_{B}\sim\calN(0,\varsigma
^{2}\id_{\supp h_{AB}})$ where $\id_{n}$ is the identity operator in $n$ dimensions.

Under these assumptions, Eve receives%
\begin{equation}
y_{E}=h_{AB}\ast Q(\beta_{A}+\gamma\beta_{B})+n_{A}\ast Q\beta_{A}+n_{B}\ast
Q\beta_{B}+n_{E}. \label{eq:erweitert}%
\end{equation}
Hence, at least when $\varsigma=\sigma=0$, Eve can recover each component of
$\beta_{A}+\gamma\beta_{B}$. Specifically, she can recover all information in the case $\gamma=0$. Thus, the distinguishing factor $\gamma>0$ quantifies the extent of
information that Eve can gather after reception. Obviously, Eve must still
somehow infer the (possibly static) channel from Alice and Bob (over time)
though, which gives additional security.
\end{ass}

Figure~\ref{closetalker} depicts the signal and attack model.



\begin{algorithm2}
\caption{HiHTP \label{alg:hihtp}}
\begin{algorithmic}
\REQUIRE Linear operator $C$, measurement $y$, and sparsity parameters $(s,k)$
\STATE Initialize $W$ \REPEAT
\STATE $W_{i+\sfrac{1}{2}} \leftarrow W_{i} + C^{*}(y-CW_{i})$ \STATE  $\Omega
_{i}$ $=$ $\supp$(best $(s,k)$-sparse approximation of $W_{i+\sfrac{1}{2}}$)
\STATE $W_{i} = \argmin_{\supp W \sse \Omega_{i}} \norm{y - C(W)}^{2}$
\UNTIL{stopping criterion is met} \ENSURE Approximate $(s,k)$-sparse solution
of $y=C(W)$
\end{algorithmic}
\end{algorithm2}

\subsection{Hierarchically sparse blind deconvolution}

Using the well-known lifting trick \cite{candes:rip2008,candes2015phase}, the
bi-linear equation can be transformed into a linear one as (missing subscripts
$A,B$)%
\[
y=B\mathrm{vec}\left(  hx^{T}\right)  +n
\]
Here, $B\in\left(  0,1\right)  ^{n\times n^{2}}$ is a suitable matrix with
$(B)_{i,jn+k}=\delta_{i,j+k\mod n}$, which is composed as%
\[
B=\left(
\begin{array}
[c]{ccccc}%
10...0 & 0...01 & 0...10 & ... & 01...0\\
01...0 & 10...0 & 0...01 & ... & ...\\
... & ... & ... & ... & ...\\
... & ... & ... & ... & ...
\end{array}
\right)
\]
and $n$ is the Gaussian noise vector. The sparse signal model $x=Q\beta$ with
the random coding matrix $Q\in\mathbb{C}^{\mu\times n}$ and $k$-sparse binary
vector $\beta\in\{0,1\}^{n}$ of length $n$ can be incorporated in the
formulation to yield
\[
y=B(\id_{n}\otimes Q)\mathrm{vec}\left(  h\beta^{T}\right)
\label{eq:A-operator}%
\]
By this procedure, the blind deconvolution problem of recovering $h$ and
$\beta$ from the measurement $y$ is turned into a matrix recovery problem in
$\beta h^{T}$, or equivalently of the tensor $h\otimes\beta\in\C^{\mu}%
\otimes\C^{n}$, from linear measurements $C:\C^{\mu}\otimes\C^{n}%
\rightarrow\C^{\mu}$, where $C$ is defined through
\begin{equation}
C(h\otimes\beta)=B(\id_{n}\otimes Q)\mathrm{vec}\left(  h\otimes\beta\right)
.\label{eq:C}%
\end{equation}
The factors $h\in\C^{\mu}$ and $\beta\in\C^{n}$ can be obtained from $X$ as
the first left and right singular vectors of the SVD of $X$. In this notation,
the observations of Alice, Bob and Eve, respectively, are given by
\begin{align*}
y_{A} &  =C(h_{AB}\otimes\beta_{B})+n_{B},\\
y_{B} &  =C(h_{AB}\otimes\beta_{A})+n_{A},\\
y_{E} &  =C(h_{A\rightarrow E}\otimes\beta_{A}+h_{B\rightarrow E}\otimes
\beta_{B}))+n_{E}.
\end{align*}
As outlined in the previous section, both the legitimate users and the
adversary want to deconvolve $h\ast Q\beta$ to recover $h$ and $\beta$. This
amounts to recovering $h\otimes\beta$ from the linear measurements
$C(h\otimes\beta)$. To do this, we can utilize the structure of the matrix
$X=h\otimes\beta$: It contain non-zero entries only in $s$ columns (since $h$
does), and that each of these columns in themselves are $k$-sparse (since $b$
is). This type of signals are known as $(s,k)$-sparse signals(or more
informally hisparse signals). They can be recovered from the measurements $y$
using the so-called Hierarchical Thresholding Pursuit (HiHTP) (Algorithm
\ref{alg:hihtp}). (We refer to \cite{eisert2021hierarchical} for a more
thorough introduction of hierarchical sparsity).

Each iteration of the HiHTP algorithm consists of three steps: $(i)$ Take a
gradient descent step from the current iterate $W_{i}$ with respect to the
quadratic loss $\ell(W)=\norm{y-CW}^{2}$. $(ii)$ Project (\emph{hierarchically
threshold}) the resulting matrix $W_{i+\sfrac{1}{2}}$ is onto the set of
$(s,k)$-sparse signals, and record its support $\Omega_{i}$. $(iii)$ Minimize
the loss $\ell$ among all matrices supported on $\Omega_{i}$ to determine
$W_{i+1}$. This technique of subsequent gradient steps, projections and local
least-squares problems is more generally applicable in the framework of
\emph{model-based compressed sensing} \cite{baraniuk2010model}. What makes the
HiHTP algorithm special is that the projection step
is computationally cheap: To project a matrix $W$ onto the set of
$(s,k)$-sparse signals, one first determines the best $k$-sparse approximation
$\hat{w}_{\ell}$ of each column $w_{\ell}$ (this is simple: just set all
entries but the $k$ largest ones to zero). Next, the norms of the approximated
columns $\hat{w}_{\ell}$ are calculated. The best approximation of $W$ is then
obtained by setting all columns $\hat{w}_{\ell}$ but the ones with the $s$
largest norms to zero (see also Figure \ref{fig:proj}). Thus, the projection
can be calculated in $O(\dim(W))$ time, which is the same complexity as
calculating the best $sk$-sparse approximation of $W$ (which would be needed
for the simpler HTP algorithm, which is oblivious to the hierarchical nature
of the sparsity of the signal to be recovered.). \begin{figure}[ptb]
\centering
\includegraphics[width=.8\linewidth]{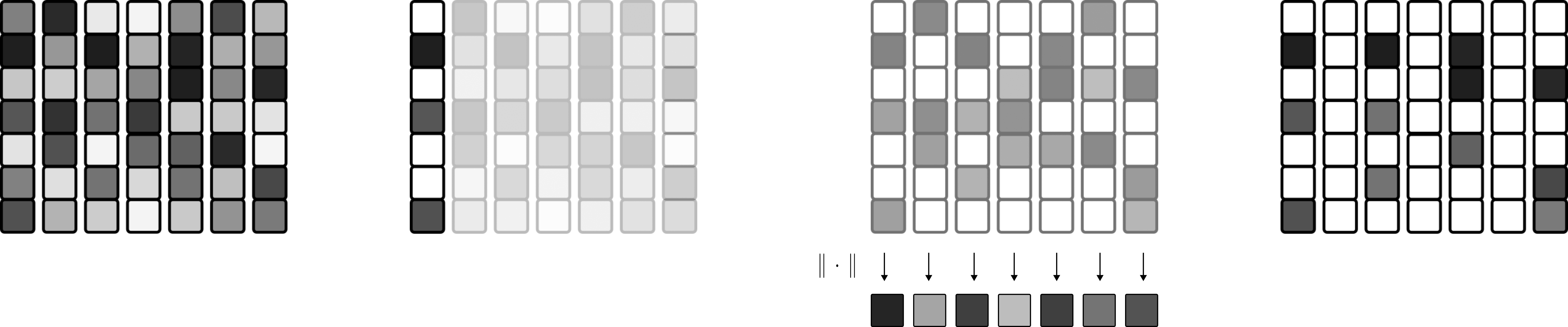} \caption{The two
phases of projection onto the set $(s,k)$-sparse signals. First, each
individual column is projected to its best $k$-sparse approximation. Then, the
$s$ largest approximations in $\ell_{2}$-norm are kept. }%
\label{fig:proj}%
\end{figure}

The algorithm provably converges towards any $(s,k)$-sparse ground truth, in a
robust fashion, as soon as the linear operator $C$
has the
\emph{HiRIP} property \cite{HiHTP}. The latter means that the operator acts
almost isometrically on the set of hierarchically sparse signals, i.e.
$\norm{C(W)}\approx\norm{W}$ for all
hisparse $W$. In Section~\ref{sec:correctness}, we will discuss a viable
randomized choice of $Q$ which implies that $C$ has the HiRIP property with
high probability -- which in turn implies that our algorithm succeeds with
high probability.

\subsection{Key generation}

For the key generation, both Alice and Bob use the blind deconvolution solver
to recover $\mathrm{vec}(h\otimes\beta_{B})$ (Alice) and $\mathrm{vec}%
(h\otimes\beta_{A})$ (Bob). Once Alice has determined the value of
$h\otimes\beta_{B}$, she can (since she knows her signal $\beta_{A}$)
calculate the value of
\begin{equation}
\mathfrak{c}\left(  \beta_{A},\beta_{B},h\right)  =\widehat{\mathrm{vec}%
(h\otimes\beta_{B})}\cdot\widehat{\mathrm{vec}(e_{0}^{\mu}\otimes\beta_{A}%
)},\label{eq:conv1}%
\end{equation}
where $\widehat{\cdot}$ again denotes the DFT. Similarly, Bob can calculate
\begin{equation}
\mathfrak{c}^{\prime}\left(  \beta_{B},\beta_{A},h\right)  =\widehat
{\mathrm{vec}(h\otimes\beta_{A})}\cdot\widehat{\mathrm{vec}(e_{0}^{\mu}%
\otimes\beta_{B})}\label{eq:conv2}%
\end{equation}
where we write here $e_{i}^{\mu}$ to mark that the $i$-th unit vector is in
$\mu$ dimensions. We claim that both terms \eqref{eq:conv1}, \eqref{eq:conv2}
represent a common secret for Alice and Bob.

To prove this, let us introduce two 'lifting' operations on vectors
$h\in\C^{\mu}$ and $\beta\in\C^{n}$:%
\[
h^{\Uparrow}=\sum_{k\in\lbrack\mu]}h_{k}e_{k}^{n\mu},\quad\beta^{\uparrow
}=\sum_{k\in\lbrack n]}\beta_{k}e_{k\mu}^{n\mu}.
\]
First, we show that tensor $h\otimes\beta$ can be reinterpreted as the
convolution of $h^{\Uparrow}$ with $\beta^{\uparrow}$. For this, it is not
hard to prove that in any dimension,
\[
e_{k}\ast e_{j}=e_{k+j}%
\]
by a direct calculation (or alternatively by arguing that $\widehat{(e_{k}\ast
e_{j})}=\widehat{e_{k}}\cdot\widehat{e_{j}}=\widehat{e_{k+j}}$). This implies
that
\begin{align*}
h^{\Uparrow}\ast\beta^{\uparrow} &  =\sum_{j\in\lbrack\mu]}\sum_{k\in\lbrack
n]}h_{j}e_{j}^{n\mu}\ast\beta_{k}e_{k\mu}^{n\mu}\\
&  =\sum_{j\in\lbrack\mu]}\sum_{k\in\lbrack n]}h_{j}\beta_{k}e_{j+k\mu}^{n\mu
}\\
&  =\mathrm{vec}(h\otimes\beta)
\end{align*}
as claimed. It is now simple to derive the equality of the keys. Taking the
Fourier-transformation yields
\[
\widehat{\mathrm{vec}(h\otimes\beta)}=\widehat{h^{\Uparrow}}\cdot
\widehat{\beta^{\uparrow}}.
\]
By in particular setting $h=e_{0}^{\mu}$, for which $h^{\Uparrow}=e_{0}^{n\mu
}$, we get
\[
\widehat{\mathrm{vec}(e_{0}^{\mu}\otimes\beta)}=\widehat{e_{0}^{n\mu}}%
\cdot\widehat{\beta^{\uparrow}}=\widehat{\beta^{\uparrow}}.
\]
By setting $h=h_{AB}$, and $\beta$ equal to $\beta_{A}$, or $\beta_{B}$,
respectively, in \eqref{eq:conv1}, \eqref{eq:conv2}, we now have indeed proved:

\begin{prop}
Alice's and Bob's secrets coincide:
\[
\mathfrak{c}=\mathfrak{c}^{\prime}=\widehat{h_{AB}^{\Uparrow}}\cdot
\widehat{\beta_{A}^{\uparrow}}\cdot\widehat{\beta_{B}^{\uparrow}}.
\]

\end{prop}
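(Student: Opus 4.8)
The plan is short, because essentially all of the algebra has already been assembled above. Recall the two facts just established: (i) for any $h\in\C^{\mu}$ and $\beta\in\C^{n}$ one has $h^{\Uparrow}\ast\beta^{\uparrow}=\mathrm{vec}(h\otimes\beta)$, hence by the convolution theorem $\widehat{\mathrm{vec}(h\otimes\beta)}=\widehat{h^{\Uparrow}}\cdot\widehat{\beta^{\uparrow}}$; and (ii) the specialization $h=e_{0}^{\mu}$, for which $h^{\Uparrow}=e_{0}^{n\mu}$ and $\widehat{e_{0}^{n\mu}}$ is the all-ones vector, so that $\widehat{\mathrm{vec}(e_{0}^{\mu}\otimes\beta)}=\widehat{\beta^{\uparrow}}$. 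The proof of the proposition is then a direct substitution into the defining formulas together with the commutativity of the element-wise product.

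Concretely, I would evaluate \eqref{eq:conv1} by substituting $h=h_{AB}$ and $\beta=\beta_{B}$ in the first factor and $\beta=\beta_{A}$ in the second factor (which already carries $h=e_{0}^{\mu}$). By (i) and (ii) this gives $\mathfrak{c}=\widehat{h_{AB}^{\Uparrow}}\cdot\widehat{\beta_{B}^{\uparrow}}\cdot\widehat{\beta_{A}^{\uparrow}}$. Performing the symmetric substitution in \eqref{eq:conv2} yields $\mathfrak{c}^{\prime}=\widehat{h_{AB}^{\Uparrow}}\cdot\widehat{\beta_{A}^{\uparrow}}\cdot\widehat{\beta_{B}^{\uparrow}}$. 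Since multiplication is element-wise and therefore commutative in $\C^{n\mu}$, the two expressions are identical, which is the claim. Note that nothing about the randomness of $Q$, the channel model, or the HiRIP property enters here: this is a purely algebraic identity, valid for arbitrary $h_{AB},\beta_{A},\beta_{B}$.

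The only point where care is genuinely needed sits inside fact (i): when one expands $h^{\Uparrow}\ast\beta^{\uparrow}=\sum_{j\in[\mu]}\sum_{k\in[n]}h_{j}\beta_{k}\,e_{j+k\mu}^{n\mu}$ using bilinearity of cyclic convolution and the rule $e_{k}\ast e_{j}=e_{k+j}$, one must verify that the index map $(j,k)\mapsto j+k\mu$ is a bijection from $[\mu]\times[n]$ onto $[n\mu]$ with no reduction modulo $n\mu$ — this is exactly the mixed-radix digit decomposition of an integer in $\{0,\dots,n\mu-1\}$ — so that every standard basis vector of $\C^{n\mu}$ is produced exactly once and the right-hand side is literally $\mathrm{vec}(h\otimes\beta)$ under the column-stacking convention. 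That is the whole obstacle, and it is mild; the convolution rule for unit vectors, the convolution theorem, and the final commutativity step are routine. I therefore expect the write-up to be essentially the two-line substitution sketched above, with a remark pointing to the index bookkeeping for (i).
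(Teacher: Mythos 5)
Your proof is correct and follows essentially the same route as the paper: establishing $h^{\Uparrow}\ast\beta^{\uparrow}=\mathrm{vec}(h\otimes\beta)$ via $e_k\ast e_j=e_{k+j}$, taking the DFT, specializing $h=e_0^{\mu}$, and substituting into the two defining formulas with commutativity of the element-wise product. Your additional remark on the bijectivity of $(j,k)\mapsto j+k\mu$ is a sound (if minor) piece of bookkeeping that the paper leaves implicit.
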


Now then the actual key is obtained from quantizing $\mathfrak{c}%
,\mathfrak{c}^{\prime}$ and hashing the outcomes. Moreover, the procedure can
be repeated \ $n$ rounds. We have now collected all the necessary ingredients
for the key generation. A summary of our protocol can be found in Table
\ref{tab:WDH_table}. \begin{2coltable}
\centering\resizebox{.7\linewidth}{!}
{
\begin{tabular}{lcl}
\hline
{\bf Alice}: Input $n,Q$ & & {\bf Bob}: Input $n,Q$ \\
\hline
{\bf For round i to m} & & {\bf For round i to m}\\
\ Choose random $\beta_A$ & & \ Choose random $\beta_B$\\
\ Send $x_A=Q\beta_A$ & $\longleftrightarrow$ & \ Send $x_B=Q\beta_B$\\
\ Measure $h* Q\beta_B$ & $\longleftrightarrow$ & \ Measure $h*Q\beta_A$ \\
\ Compute $\mathfrak{a_1}=h \otimes \beta_B$ && \ Compute  $\mathfrak{b_1}=h \otimes \beta_A$ \\
\ Compute $\mathfrak{a_2}=e_{0}^{\mu}\otimes\beta_{A}$ && \ Compute  $\mathfrak{b_2}=e_{0}^{\mu}\otimes\beta_{B}$ \\
\ Compute   $\mathfrak{c} = \widehat{\mathrm{vec}(\mathfrak{a}_1)}\cdot\widehat{\mathrm{vec}(\mathfrak{a}_2)}$
&& \ Compute $\mathfrak{c}^{\prime}= \widehat{\mathrm{vec}(\mathfrak{b}_1)}$
$\cdot\widehat{\mathrm{vec}(\mathfrak{b}_2)}$ \\
\ $\mathfrak{c_i^m}$ = $\mathfrak{c}$ & & \ $\mathfrak{c_i^m}^{\prime}$ = $\mathfrak{c}^{\prime}$\\
{\bf End for}& & {\bf End for}\\
Quantize and hash $\mathfrak{c^m}$ & & Quantize and hash $\mathfrak{c^m}^{\prime}$\\
\hline
\end{tabular}} \caption{WDH key agreement scheme ($\longleftrightarrow$ means full duplex transmission)}%
\label{tab:WDH_table}%
\end{2coltable}

\section{Security analysis}

\label{sec:correctness}

\subsection{Correctness}

First, let us use theory about bisparse signal retrieval to argue that $Q$ can
be chosen in such a way that Alice and Bob provably can solve their blind
deconvolution problems.

In the notation used here, we assume that the codebook $Q$ can be written as
$UA$, where $A\in\C^{m,n}$ is a matrix with a $k$-restricted isometry
property, and $U\in\C^{\mu,m}$ is a matrix with i.i.d. subgaussian entries. As
long as the dimension $\mu$ is sufficiently large, $Q$ will have the HiRIP,
which in turn shows that HiHTP succeds. Concretely, combining the results of
\cite{BCHTP} and \cite{eisert2021hierarchical}, we have the following.

\begin{theo}
Let $\varepsilon>0$. Suppose that the entries of the matrix $Q\in\C^{\mu,n}$
are i.i.d Gaussian with expected value $0$ and variance $\sfrac{1}{\mu}$.
Further assume that
\begin{equation}
\mu\geq K(s\log(s)^2\log(\mu)\log(\mu n)+sk\log(n))\cdot\max(1,\log(\varepsilon^{-1}))\,
\label{eq:mubound}%
\end{equation}
where $K$ is a constant independent of $s,\mu,k,n$ and $\varepsilon$. Then,
with a probability at least $1-\epsilon$, the following holds: For any
$s$-sparse $h$ and $k$-sparse $\beta$, the iterates $W_{k}$ of the
HiHTP-algorithm applied with $y=C(h\otimes\beta)+n$ and $C$ defined as in
\eqref{eq:C} fulfill
\[
\norm{W_j - h \otimes \beta}\leq\rho^{j}\norm{W_0-h\otimes \beta}+\theta
\norm{n},
\]
where $0\leq\rho<1$ and $\theta>0$ are constants only depending on $K$. In
other words, if the number of measurements fulfill \eqref{eq:mubound}, the
HiHTP will with high probability recover every $(s,k)$-sparse vector in a
robust fashion.
\end{theo}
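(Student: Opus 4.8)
The plan is to treat the two cited theorems as modules and connect them through the \emph{hierarchical restricted isometry property} (HiRIP). Recall from \cite{eisert2021hierarchical,HiHTP} that there is a universal threshold $\delta_\star$ (one may take $\delta_\star=1/3$) such that: if a linear map $C$ satisfies $(1-\delta)\norm{W}^2\le\norm{C(W)}^2\le(1+\delta)\norm{W}^2$ for every tensor $W$ supported on at most $s$ columns each of which is $k$-sparse, with $\delta\le\delta_\star$, then the HiHTP iterates run on $y=C(W)+n$ obey $\norm{W_j-W}\le\rho^{\,j}\norm{W_0-W}+\theta\norm{n}$ with $\rho\in[0,1)$ and $\theta>0$ depending only on $\delta$. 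Hence the theorem reduces entirely to proving: \emph{under \eqref{eq:mubound}, the operator $C$ of \eqref{eq:C} has HiRIP constant $\le\delta_\star$ with probability at least $1-\varepsilon$.}

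Next I would unpack the structure of $C$. The operator $B(\id_n\otimes Q)$ is the composition of a deterministic, convolution-type selection matrix $B$ with a block-diagonal copy of the random codebook $Q$; in the Fourier domain its action on $\mathrm{vec}(h\otimes\beta)$ amounts to entrywise multiplication by the DFT $\widehat h$ of the (fixed) $s$-sparse channel composed with the linear map induced by $Q$ on $\beta$. Consequently, for a fixed column-support $S$ of $h$ with $|S|\le s$ and a fixed support $T$ of $\beta$ with $|T|\le k$, the squared measurement norm $\norm{C(h\otimes\beta)}^2$ is a weighted second-order chaos in the Gaussian entries of $Q$ whose weights are built from $\widehat h$ and are \emph{flat} precisely because $h$ is $s$-sparse in the time domain, so that $\norm{\widehat h}_\infty^2\le (s/\mu)\,\norm{\widehat h}_2^2$. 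A Hanson--Wright / Bernstein estimate then yields, for one such $(h,\beta)$, a deviation probability $\le 2\exp(-c\,\delta^2\mu/s)$, and an $\varepsilon$-net over the $O(sk)$-dimensional space of tensors with supports $(S,T)$ upgrades this to a uniform bound over that pattern at the price of a factor $e^{Csk}$ in the failure probability. This bi-sparse, convolutionally structured restricted-isometry estimate is essentially what \cite{BCHTP} provides, while \cite{eisert2021hierarchical} supplies the device for promoting such level-wise isometry bounds to a genuine two-level HiRIP bound.

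The third step is the union bound over support patterns. There are $\binom{\mu}{s}$ choices of $S$ and, since hierarchical sparsity allows each active column its own $k$-support, up to $\binom{n}{k}^{s}$ joint choices of column supports; these contribute $s\log\mu$ and $sk\log n$ to the exponent. Combining with the per-pattern net cost $O(sk)$ and matching the total against $\log(\varepsilon^{-1})$ produces the $sk\log(n)$ term of \eqref{eq:mubound}, while the residual overhead $s\log(s)^2\log(\mu)\log(\mu n)$ is exactly the price of Dudley-type chaining over the partial-circulant structure of $B$ — the same polylogarithmic overhead that appears in the restricted isometry analysis of partial random circulant matrices in the spirit of Krahmer--Mendelson--Rauhut, and carried through in \cite{BCHTP}. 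Collecting the estimates shows $\delta_{s,k}\le\delta_\star$ off an event of probability $\le\varepsilon$; combined with the first step this is the assertion of the theorem.

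The genuine obstacle is the sharp form of the third step: controlling the supremum of the relevant chaos process over the structured index set with the stated polylogarithmic dependence requires Talagrand's $\gamma_2$-machinery rather than a crude net, and the bookkeeping of which logarithmic factor is charged to the circulant convolution, which to the column sparsity $k$, and which to the number of active columns $s$ is delicate. This is precisely where the argument leans on the estimates already carried out in \cite{BCHTP}, with \cite{eisert2021hierarchical} contributing the hierarchical (two-level) refinement and the final HiHTP convergence bound.
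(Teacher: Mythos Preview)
Your plan is aligned with the paper's treatment, but you should know that the paper does not actually \emph{prove} this theorem: it is stated immediately after the sentence ``Concretely, combining the results of \cite{BCHTP} and \cite{eisert2021hierarchical}, we have the following,'' and no further argument is given. In other words, the paper's entire ``proof'' is the citation of those two references --- \cite{BCHTP} for the HiRIP of the bi-sparse blind-deconvolution operator $C$, and \cite{eisert2021hierarchical,HiHTP} for the HiHTP convergence guarantee once HiRIP is in hand.

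Your proposal follows exactly this two-module strategy (HiRIP $\Rightarrow$ HiHTP convergence), and then goes further by sketching how the HiRIP bound in \cite{BCHTP} is obtained: the chaos/Hanson--Wright estimate on a fixed support pattern, the net over the $O(sk)$-dimensional fibre, the union bound over $\binom{\mu}{s}\binom{n}{k}^s$ support patterns, and the $\gamma_2$-chaining that produces the polylogarithmic overhead $s\log(s)^2\log(\mu)\log(\mu n)$. That level of detail exceeds what the paper itself supplies; it is a reasonable outline of what lies inside \cite{BCHTP}, and your identification of the chaining over the partial-circulant structure as the source of the extra log factors is correct in spirit. So: same approach as the paper, just more explicit.
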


The complexity bound \eqref{eq:mubound} is obviously not sample optimal --
counting linear degrees of freedom suggests that the problem should rather be
solvable using only $s+k$ measurements. This is however not as crucial as it
may seem: The WDH scheme does not rely on utilizing any specific algorithm for
the sparse blind deconvolution, and we may choose any of the many that are
available (see e.g.
\cite{li2019rapid,lee2017near,lee2016blind,jung2018blind,Chen2021,Ahmed2013blind,Ling_2015}
and references therein). We may in particular take additional structures of
$\beta_{A}$ and $\beta_{B}$ into account, for example if their values are
drawn from a finite, known alphabet. The HiHTP-algorithm here rather fulfills
a theoretical role -- since it \emph{provably} recovers every signal as soon
as the HiRIP is given, and the HiRIP can be guaranteed with high probability
using a reasonable codebook $Q$.

\subsection{Security}

In the last section, we saw that for a reasonable $Q$, the blind deconvolution
problem $y=C(h\otimes\beta)+n$ can be solved for $h\otimes\beta$. This of
course also applies to Eve -- she can (assuming no noise) from her
measurements gain access to $h_{A\rightarrow E}\otimes\beta_{A}+\gamma
h_{B\rightarrow E}\beta_{B}$. Note however that this datum only partially
reveals $h_{AB}$ and $\beta_{A}$, $\beta_{B}$. If the filters are assumed
noise-free (i.e. $\varsigma^{2}=0$), it only reveals the superposition of
$h_{AB}\otimes\beta_{A}$ and $h_{AB}\otimes\beta_{B}$, and cannot directly
infer individual contributions of Alice's and Bob's sequences. Concretely, she
must in this case determine the codeword only from the knowledge of
$h_{AB}\otimes(\beta_{A}+\gamma\beta_{B})$. The purpose of the first part of
this section section is to prove that if $\gamma\approx1$, \emph{she
essentially can not}.

In the second part of the section, we will also tackle the more realistic case
of $\varsigma, \sigma\neq0$. We will show that as long as $\sigma$ is
comparable to $\varsigma$, and the sparsity of the filter $h$ is smalled
compared to $k$, the codeword is almost as secure as before. If $s$ is large,
we can still guarantee security if $\varsigma$ is considerably smaller than
$\sigma$.

Formally, we will give an estimate of the \emph{conditional entropies} of the
codeword $\mathfrak{c}$ given the knowledge $h_{A\to E} \otimes(\beta
_{A}+\gamma h_{B\to E}\beta_{B})$. Let's repeat its definition. Given two
discrete random variables $X$ and $Y$, the conditional entropy of $X$ given
$Y$ is equal to
\[
H(X\,|\,Y)=-\sum_{\substack{x\in\calX\\y\in\calY}}\mathbb{P}(X=x,Y=y)\log
\left(  \prb{ X=x \, \vert \, Y=y}\right)  .
\]
where the sum ranges over all possible values $\calX$ of $x$ and $\calY$ of
$y$, respectively. For a continuous variable, an integral is instead used:
\begin{align*}
H(X \, \vert\, Y) = -\int p(x, y)\log\left(  p(x\, \vert\,y)\right)
\mathrm{d}x \mathrm{d}y.
\end{align*}
Let us first record an estimate the amount of entropy contained in $\beta
_{A}+\gamma\beta_{B}$.

\begin{theo}
\label{theo:underlying_entropy} Suppose that the $\alpha_{A},\alpha_{B}%
\in\lbrack0,1]^{k}$ are uniformly drawn from the real interval $[0,1]$. Define%
\[
H_{\gamma}(k):=-\log\left(  \binom{2k}{k}^{-1}\frac{1-\delta^{2k}}%
{(1-\delta)^{k}}+\delta^{k}\right)
\]
where $\gamma:=1-\delta$. Then, conditioned on the event that $\beta_{A}$ and
$\beta_{B}$ have disjoint supports,%
\[
H(\beta_{A}\,|\,\beta_{A}+\gamma\beta_{B})\geq H_{\gamma}(k).
\]

\end{theo}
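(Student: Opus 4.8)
The plan is to convert the mixed discrete–continuous quantity $H(\beta_A\mid\beta_A+\gamma\beta_B)$ into a purely combinatorial entropy and then dispatch it with one application of Jensen's inequality. Write $v:=\beta_A+\gamma\beta_B$ and work on the conditioning event $\{\supp\beta_A\cap\supp\beta_B=\emptyset\}$, on which $\supp v=\sigma_A\sqcup\sigma_B$ has exactly $2k$ elements almost surely, the nonzero entries being continuously distributed. The first step is the reduction: once $v$ is known, $\beta_A=\sum_{i\in\sigma_A}v_i e_i$ (because $v$ and $\beta_A$ agree on $\sigma_A$ and $\beta_A$ vanishes off it), while $\sigma_A=\supp\beta_A$ is a function of $\beta_A$; hence, given $v$, the laws of $\beta_A$ and of $\sigma_A$ are in bijection and $H(\beta_A\mid v)=H(\sigma_A\mid v)$. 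It thus suffices to lower-bound the entropy of the finite random variable $\sigma_A$ given $v$.

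Second, I would determine the posterior law of $\sigma_A$ given $v$. Conditioned on the union $\Sigma:=\supp v$, the prior on $\sigma_A$ is uniform over the $k$-subsets of $\Sigma$; conditionally on $\sigma_A$, the entries $(v_i)_{i\in\sigma_A}$ are i.i.d. uniform on $[0,1]$ and the entries $(v_i)_{i\in\Sigma\setminus\sigma_A}$ are i.i.d. uniform on $[0,\gamma]$ (being $\gamma$ times uniform $[0,1]$ variables). Plugging this into Bayes' rule, the likelihood of a candidate support $S$ is a constant times the indicator that no coordinate outside $S$ exceeds $\gamma$; therefore the posterior of $\sigma_A$ given a realisation of $v$ is \emph{uniform} over $\{S\subseteq\Sigma:\ |S|=k,\ L(v)\subseteq S\}$, where $L(v):=\{i\in\Sigma:\ v_i>\gamma\}$. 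This set has cardinality $N_\ell:=\binom{2k-\ell}{k-\ell}$ with $\ell:=|L(v)|$, so $H(\sigma_A\mid v)=\mathbb E[\log N_\ell]$. Finally, a coordinate of $v$ lying in $\sigma_B$ is at most $\gamma$ always, whereas a coordinate lying in $\sigma_A$ is uniform on $[0,1]$ and thus exceeds $\gamma=1-\delta$ with probability $\delta$, independently over the $k$ positions; hence $\ell\sim\mathrm{Bin}(k,\delta)$ and
\[
H(\beta_A\mid v)=\sum_{\ell=0}^{k}\binom{k}{\ell}\delta^{\ell}(1-\delta)^{k-\ell}\log\binom{2k-\ell}{k-\ell}.
\]

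Third, I would bound this expectation. Concavity of $\log$ (Jensen) gives $\mathbb E[\log N_\ell]\ge-\log\mathbb E[N_\ell^{-1}]$, so it is enough to show $\mathbb E[N_\ell^{-1}]\le\binom{2k}{k}^{-1}\frac{1-\delta^{2k}}{(1-\delta)^k}+\delta^k$. The elementary identity $\binom{k}{\ell}/\binom{2k-\ell}{k-\ell}=\binom{2k}{\ell}/\binom{2k}{k}$ turns $\mathbb E[N_\ell^{-1}]$ into $\binom{2k}{k}^{-1}\sum_{\ell=0}^{k}\binom{2k}{\ell}\delta^{\ell}(1-\delta)^{k-\ell}$; peeling off the term $\ell=k$ (which contributes precisely $\delta^k$) leaves the partial binomial sum $\sum_{\ell=0}^{k-1}\binom{2k}{\ell}\delta^{\ell}(1-\delta)^{k-\ell}$. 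Substituting $t:=\delta/(1-\delta)\ge0$, so that $\delta^{\ell}(1-\delta)^{k-\ell}=t^{\ell}(1+t)^{-k}$, this equals $(1+t)^{-k}\sum_{\ell=0}^{k-1}\binom{2k}{\ell}t^{\ell}\le(1+t)^{-k}\sum_{\ell=0}^{2k-1}\binom{2k}{\ell}t^{\ell}=(1+t)^{-k}\big((1+t)^{2k}-t^{2k}\big)=\frac{1-\delta^{2k}}{(1-\delta)^k}$, the inequality being immediate since only the nonnegative terms $\ell=k,\dots,2k-1$ are added. Assembling the pieces gives $H(\beta_A\mid v)\ge-\log\big(\binom{2k}{k}^{-1}\frac{1-\delta^{2k}}{(1-\delta)^k}+\delta^k\big)=H_\gamma(k)$.

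The one step that deserves care is the posterior computation: one must verify that once $v$ is conditioned on all likelihood weighting disappears and the posterior really is \emph{uniform} on the admissible supports, and note that the null events (two entries coinciding, an entry equal to exactly $\gamma$) are harmless. Everything after that is bookkeeping — in particular the partial-binomial-sum estimate, which looks like it might need a tail bound, trivialises under the substitution $t=\delta/(1-\delta)$. As sanity checks, $k=1$ gives $H(\beta_A\mid v)=(1-\delta)\log 2$ versus $H_\gamma(1)=-\log\frac{1+3\delta}{2}$, and $\delta\to0$ (i.e. $\gamma=1$) makes both sides equal $\log\binom{2k}{k}$, matching the $\log_2\binom{2k}{k}$ information bits claimed in the informal theorem.
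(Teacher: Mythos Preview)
Your proof is correct and takes a genuinely different route from the paper's. The paper argues via min-entropy: it observes that $H\geq H_\infty$, interprets $\exp(-H_\infty)$ as the success probability of the maximum-likelihood guess (assign the $k$ largest entries of $v$ to Alice), computes that probability as $\mathbb P\big((1-\delta)\max_{i<k}\alpha_i\le\min_{i\ge k}\alpha_i\big)$ through order-statistics integrals, and then bounds the resulting integral termwise. You instead compute the posterior of $\sigma_A$ given $v$ explicitly, obtain the exact closed form $H(\beta_A\mid v)=\mathbb E_{\ell\sim\mathrm{Bin}(k,\delta)}\big[\log\binom{2k-\ell}{k-\ell}\big]$, and only then apply Jensen in the form $\mathbb E[\log N_\ell]\ge-\log\mathbb E[N_\ell^{-1}]$. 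The two arguments meet at the same quantity---your $\mathbb E[N_\ell^{-1}]$ is precisely the ML success probability the paper integrates---but your path buys more: the exact entropy formula (not just a bound) drops out for free, and the combinatorial identity $\binom{k}{\ell}/\binom{2k-\ell}{k-\ell}=\binom{2k}{\ell}/\binom{2k}{k}$ together with the substitution $t=\delta/(1-\delta)$ replaces a page of integration-by-parts bookkeeping with two lines of algebra. The one point to state a bit more carefully is the meaning of $H(\beta_A\mid v)$ itself: $\beta_A$ is a priori continuous, but conditionally on $v$ it is supported on the finite set $\{v|_S:\ |S|=k,\ S\subseteq\supp v\}$, so the (discrete) conditional entropy is well defined and your reduction $H(\beta_A\mid v)=H(\sigma_A\mid v)$ is legitimate.
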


The proof of the above result is just technical, and does not provide much
insight. In the interest of brevity, we give it in Appendix
\ref{app:deltaproof}.

\begin{rem}
\label{rem:stirling} The attentive reader may already have noticed that the
above theorem gives a vacuous statement when $\gamma\to0$. This should however
not bother us -- it is clear that all security is lost when $\gamma$ is small,
since it will be possible to distinguish with high probability which
coefficients in $\beta_{A}+\gamma\beta_{B}$ belong to Alice simply by looking
at their magnitude.

The interesting case is instead when $\gamma$ is close to $1$. Note that by a
Taylor expansion, and Sterling's approximation $\log(\binom{2k}{k}%
)~\approx~2k$, we have
\begin{align*}
H_{\gamma}(k) \approx\log\left(  \binom{2k}{k}\right)  - k \delta
\approx2k-k\delta= k(1+\gamma)
\end{align*}
for small values of $\delta$.

\end{rem}

\subsubsection{\noindent\textbf{\underline{The noiseless case}}}

\hfill


\begin{ass}
As advertised above, let us now first tackle the noiseless case, i.e.
$\varsigma=\sigma=0$. Under these assumptions, Eve receives%
\[
y_{E}=h_{AB}\ast Q(\beta_{A}+\gamma\beta_{B})+n_{E}.
\]
Here, Eve gains knowledge of $h_{AB}\otimes(\beta_{A}+\gamma\beta_{B})$ by
solving her blind deconvolution problem. The entity to estimate is therefore
$H(\mathfrak{c}\,|\,h_{AB}\otimes(\beta_{A}+\gamma\beta_{B})$.
\end{ass}

The formal result is as follows.

\begin{theo}
\label{th:security} {For almost every value of $h$},
\[
H(\mathfrak{c}\,|\,h_{AB}\otimes(\beta_{A}+\gamma\beta_{B}))\geq\left(
1-\tfrac{17k^{4}}{n}\right)  \cdot(H_{\gamma}(k)-1).
\]

\end{theo}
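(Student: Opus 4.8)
The plan is to reduce the continuous conditional entropy $H(\mathfrak{c}\,|\,h_{AB}\otimes(\beta_A+\gamma\beta_B))$ to the discrete entropy $H(\beta_A\,|\,\beta_A+\gamma\beta_B)$ already estimated in Theorem~\ref{theo:underlying_entropy}, paying only a small multiplicative loss that accounts for the probability that the random supports $\sigma_A,\sigma_B$ fail to be disjoint or that the ``injective mapping'' property from the informal main theorem fails for the particular realization of $h$. Concretely, I would first condition on the sum set $\Sigma_{AB}=\sigma_A\cup\sigma_B$ and on $h$, and observe that knowing $h_{AB}\otimes(\beta_A+\gamma\beta_B)$ is, for almost every $h$, information-equivalent to knowing $\beta_A+\gamma\beta_B$ together with $h$: from the tensor one reads off its column supports (giving $\Sigma_{AB}$) and, dividing the nonzero columns by the corresponding entries of $h$ (legitimate since $h$ is $s$-sparse with a.e.-nonzero entries on its support), one recovers $\beta_A+\gamma\beta_B$ exactly. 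Hence $H(\mathfrak{c}\,|\,h_{AB}\otimes(\beta_A+\gamma\beta_B)) = H(\mathfrak{c}\,|\,\beta_A+\gamma\beta_B, h)$, and since $\mathfrak{c}=\widehat{h_{AB}^{\Uparrow}}\cdot\widehat{\beta_A^{\uparrow}}\cdot\widehat{\beta_B^{\uparrow}}$ is a deterministic function of $(h,\beta_A,\beta_B)$, a further step is to show that for a.e.\ $h$ the map $\beta_A \mapsto \mathfrak{c}$ (with $h$ fixed and $\beta_B = (\beta_A+\gamma\beta_B) - \beta_A$ determined) is injective, so that $H(\mathfrak{c}\,|\,\beta_A+\gamma\beta_B,h) = H(\beta_A\,|\,\beta_A+\gamma\beta_B,h)$.

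The injectivity step is where the factor $17k^4/n$ enters and is the main obstacle. The idea is that $\mathfrak{c}$ determines $\widehat{\beta_A^{\uparrow}}\cdot\widehat{\beta_B^{\uparrow}} = \widehat{\mathrm{vec}(e_0^\mu\otimes(\beta_A\ast\beta_B))}$ up to the a.e.-nonzero Fourier multiplier $\widehat{h_{AB}^{\Uparrow}}$, hence determines the cyclic convolution $\beta_A\ast\beta_B$ (as a vector in $\C^n$ after the obvious de-upsampling), and thus the support sum-set $\sigma_A+\sigma_B \pmod n$ as a multiset. Given the sum set $\Sigma_{AB}=\sigma_A\uplus\sigma_B$ (disjoint union, on the disjoint-support event) of size $2k$, I would argue that for ``generic'' $\Sigma_{AB}$ the knowledge of the sumset $\sigma_A+\sigma_B$ together with its multiplicities pins down the ordered pair $(\sigma_A,\sigma_B)$ — and hence, using the coefficient magnitudes recovered from $\beta_A+\gamma\beta_B$, pins down $\beta_A$. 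The bad event is that $\Sigma_{AB}$ admits a nontrivial ``Sidon-type collision'' allowing two different splittings with the same convolution; a union bound over the $O(k^4)$ quadruples of elements of $\Sigma_{AB}$ that could collide modulo $n$, each colliding with probability $O(1/n)$ over the uniform choice of $\sigma_A,\sigma_B$, yields that this happens with probability at most $\approx 17k^4/n$ (the constant absorbing the disjoint-support correction and the bookkeeping of which quadruples matter).

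Assembling the pieces: on the complement of the bad event (probability $\geq 1-17k^4/n$) and on the disjoint-support event, $\mathfrak{c}$ determines $\beta_A$ given $\beta_A+\gamma\beta_B$ and $h$, so the chain rule / data-processing for conditional entropy gives
\[
H(\mathfrak{c}\,|\,h_{AB}\otimes(\beta_A+\gamma\beta_B)) \;\geq\; \bigl(1-\tfrac{17k^4}{n}\bigr)\cdot\Bigl(H(\beta_A\,|\,\beta_A+\gamma\beta_B,\ \text{disjoint supports}) - 1\Bigr),
\]
where the $-1$ bit accounts for conditioning away the (at most one bit of) information in the indicator of the bad event, and the outer factor is the standard bound $H(X\,|\,Y)\geq \mathbb P(E)\,H(X\,|\,Y,E) - h_2(\mathbb P(E))$ combined with $H(X\,|\,Y,E^c)\geq 0$. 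Plugging in Theorem~\ref{theo:underlying_entropy}, $H(\beta_A\,|\,\beta_A+\gamma\beta_B,\text{disjoint})\geq H_\gamma(k)$, finishes the proof. The only genuinely delicate point to get right is the combinatorial claim that a collision-free sumset forces unique splitting; I expect this to follow from the elementary fact that if $\sigma_A+\sigma_B = \sigma_A'+\sigma_B'$ as multisets with all four sets inside a fixed $2k$-set, then either the pair is the original or there exist four distinct indices $a,a',b,b'$ with $a+b \equiv a'+b' \pmod n$, which is exactly the event the union bound controls.
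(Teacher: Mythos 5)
Your overall architecture matches the paper's quite closely: reduce to $H(\beta_A\ast\beta_B\,|\,\beta_A+\gamma\beta_B)$ via a.e.-injectivity of $\beta\mapsto h\otimes\beta$ and of multiplication by the a.e.-nonvanishing $\widehat{h^{\Uparrow}}$ (this is Lemma \ref{lem:reduction}); impose a Sidon-type condition on $\Sigma_{AB}=\sigma_A\cup\sigma_B$ and control its failure probability by a union bound over $O(k^4)$ colliding quadruples, yielding $17k^4/n$ (Lemmas \ref{lem:inj} and \ref{lem:ineq}); then invoke Theorem \ref{theo:underlying_entropy}.

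There is, however, one genuine error: the injectivity you claim is too strong. Since convolution is commutative, the map $\Psi(\mu)=\mu\ast(\nu-\mu)$ satisfies $\Psi(\mu)=\Psi(\nu-\mu)$ identically, so the convolution (even together with the coefficients of $\beta_A+\gamma\beta_B$) can never pin down the \emph{ordered} pair $(\sigma_A,\sigma_B)$: the map is at best $2$-to-$1$, determining $\beta_A$ only up to the swap $\beta_A\leftrightarrow\gamma\beta_B$. Your hope that coefficient magnitudes break the tie fails precisely in the motivating case of on/off signals with $\gamma\approx 1$, where both splittings are equally consistent with $S$; accordingly, your final combinatorial dichotomy should read ``either the pair is the original, \emph{or the swapped pair}, or four indices collide,'' which is exactly the paper's Lemma \ref{lem:inj} (``injective modulo $\mu\sim\nu-\mu$''). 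This matters for the constant: the paper's $-1$ in $(H_\gamma(k)-1)$ is the $\log 2\le 1$ price of this $2$-to-$1$-ness, extracted via concavity of $x\mapsto -x\log x$ from $\mathbb{P}(C=c,S=s)=\mathbb{P}(\beta_A=a,S=s)+\mathbb{P}(\beta_A=a',S=s)$. Your $-1$ instead pays for a binary-entropy term $h_2(\mathbb{P}(E^c))$ which the paper avoids entirely (it splits the entropy sum over $E$ and $E^c$ and drops the $E^c$ part at no cost). Once the genuine $2$-to-$1$ loss is added on top of your $h_2$ term, your assembled bound becomes roughly $\bigl(1-\tfrac{17k^4}{n}\bigr)(H_\gamma(k)-1)-h_2\bigl(\tfrac{17k^4}{n}\bigr)$, strictly weaker than the stated theorem. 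The fix is to restrict the entropy sum to the good event rather than conditioning on its indicator, and to charge the $-1$ to the swap ambiguity.
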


Considering Remark \ref{rem:stirling}, the message of the theorem is as
follows: For large values of $k$ and $n$ and $\gamma\approx1$, the conditional
entropy of the codeword given $h_{AB}\otimes(\beta_{A}+\gamma\beta_{B})$ is,
up to small approximation error, larger than $k(1+\gamma)-1$ nats. Hence, Eve
does not have an easier time to determine the codeword than it is for her to
determine which of the peaks belong to Alice and which belong to Bob in the
toy example in the beginning of the article.

The proof of Theorem \ref{th:security} proceeds in several steps. The first
step of the argumentation is to eliminate the channel $h_{AB}$.

\begin{lem}
\label{lem:reduction} For almost every draw of $h$,
\[
H(\widehat{h_{AB}^{\Uparrow}}\cdot\widehat{\beta_{A}^{\uparrow}}\cdot
\widehat{\beta_{B}^{\uparrow}}\,|\,h\otimes(\beta_{A}+\gamma\beta
_{B}))=H(\beta_{A}\ast\beta_{B}\,|\,\beta_{A}+\gamma\beta_{B}).
\]

\end{lem}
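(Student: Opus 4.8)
The plan is to prove the identity

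$$H(\widehat{h_{AB}^{\Uparrow}}\cdot\widehat{\beta_{A}^{\uparrow}}\cdot\widehat{\beta_{B}^{\uparrow}}\,|\,h\otimes(\beta_{A}+\gamma\beta_{B}))=H(\beta_{A}\ast\beta_{B}\,|\,\beta_{A}+\gamma\beta_{B})$$

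in two halves: first I would reduce the left-hand conditioning variable, then the left-hand argument.

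Let me sketch this properly.

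\textbf{Step 1: Simplify the conditioning variable.} Recall from the Key generation subsection that $h\otimes\beta = \mathrm{vec}(h\otimes\beta)$ corresponds under the lifting maps to $h^\Uparrow\ast\beta^\uparrow$, and that the Fourier transform intertwines $\ast$ with pointwise multiplication. So knowing $h\otimes(\beta_A+\gamma\beta_B)$ is the same as knowing $\widehat{h^\Uparrow}\cdot(\widehat{\beta_A^\uparrow}+\gamma\widehat{\beta_B^\uparrow})$. For \emph{almost every} $h$ (here is where that hypothesis enters), the vector $\widehat{h^\Uparrow}\in\C^{n\mu}$ has all entries nonzero — the set of $h$ for which some DFT coordinate of the zero-padded $h^\Uparrow$ vanishes is a proper algebraic subvariety, hence Lebesgue-null. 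When $\widehat{h^\Uparrow}$ is entrywise nonzero, multiplication by it is an invertible (diagonal) linear map, so the $\sigma$-algebra generated by $\widehat{h^\Uparrow}\cdot(\widehat{\beta_A^\uparrow}+\gamma\widehat{\beta_B^\uparrow})$ equals the one generated by $\widehat{\beta_A^\uparrow}+\gamma\widehat{\beta_B^\uparrow}$, which in turn (Fourier is a bijection) equals the one generated by $\beta_A^\uparrow+\gamma\beta_B^\uparrow$, i.e. by $\beta_A+\gamma\beta_B$ (upsampling is just a coordinate embedding). Conditioning only depends on the generated $\sigma$-algebra, so $H(\,\cdot\,|\,h\otimes(\beta_A+\gamma\beta_B))=H(\,\cdot\,|\,\beta_A+\gamma\beta_B)$, for a.e.\ $h$.

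\textbf{Step 2: Simplify the argument.} Now I need $H(\widehat{h_{AB}^\Uparrow}\cdot\widehat{\beta_A^\uparrow}\cdot\widehat{\beta_B^\uparrow}\,|\,\beta_A+\gamma\beta_B)=H(\beta_A\ast\beta_B\,|\,\beta_A+\gamma\beta_B)$. The key point is that conditionally on $\beta_A+\gamma\beta_B$, the quantity $\widehat{h_{AB}^\Uparrow}$ is known (it is a fixed deterministic vector once we fix $h$, and $h$ is fixed throughout this "almost every $h$" statement) and entrywise nonzero, so as in Step 1 multiplication by it is an invertible map. Moreover $\widehat{\beta_A^\uparrow}\cdot\widehat{\beta_B^\uparrow}=\widehat{\beta_A^\uparrow\ast\beta_B^\uparrow}$, and $\beta_A^\uparrow\ast\beta_B^\uparrow$ is (via the lifting identities, taking $h=e_0$) just the zero-padded/upsampled embedding of $\beta_A\ast\beta_B$ — again an injective coordinate map. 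Chaining these bijections, the $\sigma$-algebra generated by $\widehat{h_{AB}^\Uparrow}\cdot\widehat{\beta_A^\uparrow}\cdot\widehat{\beta_B^\uparrow}$ (given the conditioning) coincides with that generated by $\beta_A\ast\beta_B$, which yields the claimed equality of conditional entropies.

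\textbf{Main obstacle.} The only genuinely delicate point is making the "almost every $h$" / "invertibility of diagonal multiplication" argument rigorous in the continuous (differential entropy) setting: one must be careful that applying a fixed invertible \emph{linear} map to a random vector changes differential entropy only by an additive constant ($\log|\det|$), and — crucially — that when it appears on \emph{both} sides of a conditional entropy (i.e. the map is applied to a variable that is then conditioned on elsewhere, or is itself $(\beta_A+\gamma\beta_B)$-measurable) the constants cancel or are irrelevant because conditional entropy is invariant under bijective reparametrization of the conditioning variable. I would handle this cleanly by phrasing everything in terms of the invariance of $H(f(X)\,|\,g(Y))$ under bijective (bi-measurable) $f$ and $g$, rather than tracking Jacobians — and the embeddings here (upsampling, zero-padding) are isometric coordinate inclusions onto their ranges, so even the $\log|\det|$ terms are zero. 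One should also note that $\beta_A+\gamma\beta_B$ determines $\beta_A$ and $\beta_B$ individually on the event that their supports are disjoint (which is the relevant conditioning event inherited from Theorem~\ref{theo:underlying_entropy}), so all the vectors in sight are measurable functions of the conditioning data together with the "which-support-is-whose" ambiguity — but this subtlety is exactly what the \emph{argument} of the entropy captures and need not be resolved here.
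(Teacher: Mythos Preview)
Your proposal is correct and follows essentially the same route as the paper: both arguments reduce the claim to the invariance of conditional entropy under injective reparametrizations of the argument and of the conditioning variable, invoking the same three injections (pointwise multiplication by the a.e.\ nonvanishing $\widehat{h^{\Uparrow}}$, the DFT, and the upsampling embedding $\beta_A\ast\beta_B\mapsto(\beta_A\ast\beta_B)^{\uparrow}=\beta_A^{\uparrow}\ast\beta_B^{\uparrow}$). The only cosmetic differences are that the paper justifies the nonvanishing of $\widehat{h^{\Uparrow}}$ via the Gaussian distribution of the channel taps (conditioned on $\supp h$) rather than your algebraic-variety/Lebesgue-null phrasing, and it sidesteps your differential-entropy worry entirely by working---as made explicit in the proof of Theorem~\ref{th:security}---in the discrete setting where the amplitudes $\alpha_A,\alpha_B$ are fixed and only the random supports carry entropy, so that injective maps preserve $H$ exactly with no Jacobian bookkeeping.
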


\begin{proof}
This lemma is more or less in its entirety a consequence of the following fact
(which is a direct consequence of the definition of conditioned entropy): If
$X$ is a random variable on $\calX$ and $Y$ a random variable on $\calY$, and
$f_{X}:\calX \to\calX^{\prime}$, $f_{Y}: \calY \to\calY^{\prime}$ are
injective functions,
\begin{align*}
H(f_{Y}(Y) \, \vert\, f_{X}(X)) = H(Y \, \vert\, X).
\end{align*}

It is clear that for $h\neq0$ fixed, i.e. almost surely, the map $\beta\mapsto
h\otimes\beta$ is injective. If each entry of $\widehat{h^{\Uparrow}}$ is
non-zero, the map $\widehat{\beta}\mapsto\widehat{h^{\Uparrow}}\cdot
\widehat{\beta}$ is also injective. Let us argue that that almost surely is
the case. Each entry of $\widehat{h^{\Uparrow}}$ is a linear combination of
the $\{\gamma_{k}\}_{k\in\lbrack s]}$. Conditioned on the draw of the support
of $h$, it is hence, as a sum of independent non-degenerate Gaussians, itself
a Gaussian. Hence, it almost never vanishes. By the law of total probability,
the same is true also not conditioned on the draw of $\mathrm{supp}h$.

Thus, almost surely with respect to the draw of $h_{AB}$, we have
\begin{align*}
&  H(\widehat{h_{AB}^{\Uparrow}}\cdot\widehat{\beta_{A}^{\uparrow}}%
\cdot\widehat{\beta_{B}^{\uparrow}}\,|\,h_{AB}\otimes(\beta_{A}+\gamma
\beta_{B}))\\
&  \quad=H(\widehat{\beta_{A}^{\uparrow}}\cdot\widehat{\beta_{B}^{\uparrow}%
}\,|\,\beta_{A}+\gamma\beta_{B})=H(\beta_{A}^{\uparrow}\ast\beta_{B}%
^{\uparrow}\,|\,\beta_{A}+\gamma\beta_{B}),
\end{align*}
where the final step follows from injectivity of the Fourier transform. Since
the above is true almost surely, it is also true unconditioned on the draw of
$h_{AB}$.

To finish the proof, note that since the map $\beta_{A}\ast\beta_{B}%
\mapsto(\beta_{A}\ast\beta_{B})^{\uparrow}=\beta_{A}^{\uparrow}\ast\beta
_{B}^{\uparrow}$ is injective, we get
\[
H(\beta_{A}^{\uparrow}\ast\beta_{B}^{\uparrow}\,|\,\beta_{A}+\gamma\beta
_{B})=H(\beta_{A}\ast\beta_{B}\,|\,\beta_{A}+\gamma\beta_{B}),
\]
which is the claim.
\end{proof}

We now move on to the actual bulk of the proof: to analyze the conditional
entropy of $\beta_{A}\ast\beta_{B}$ given $\beta_{A}+\gamma\beta_{B}$. We
begin with a crucial lemma.


\begin{lem}
\label{lem:inj} Let $\Sigma$ be a subset of $[n]$ of cardinality $2k$, and
$\nu= \sum_{c\in\Sigma} \alpha_{c}e_{c}$ a vector in $\C^{n}$, with
$\alpha_{c} \neq0$ for each $c\in\Sigma$. Suppose that the set $\Sigma+
\Sigma= \set{ a+b \text{ mod } n \, \vert \, a,b \in \Sigma }$ of sums between
elements in $\Sigma$ has cardinality $k(2k-1)$. Define $M$ as the set of
vectors of the form $\mu=\sum_{a \in\sigma} \alpha_{a} e_{a}$, where $\sigma$
ranges over the subsets of $\Sigma$ of cardinality $k$. Then,
\begin{align}
\Psi: M \to\C^{n}, \mu\mapsto\mu*(\nu-\mu)
\end{align}
is injective modulo the equivalence $\mu\sim\nu-\mu$.
\end{lem}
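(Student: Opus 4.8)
The plan is to exploit the Fourier transform to turn the convolution $\mu * (\nu - \mu)$ into a pointwise product, and then to use the hypothesis on $\Sigma + \Sigma$ to argue that the cross-terms in that product pin down $\sigma$ uniquely (up to the stated complement symmetry). Concretely, write $\lambda = \nu - \mu$, which is the vector supported on $\Sigma \setminus \sigma$ with the coefficients $\alpha_c$ inherited from $\nu$. Expanding the convolution, $\mu * \lambda = \sum_{a \in \sigma}\sum_{b \in \Sigma\setminus\sigma} \alpha_a \alpha_b\, e_{a+b \bmod n}$. The key structural observation is that knowing this vector is the same as knowing, for every $c \in \Sigma + \Sigma$, the sum $\sum \alpha_a \alpha_b$ over all representations $c = a + b \bmod n$ with $a \in \sigma$, $b \in \Sigma \setminus \sigma$.

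The main step is then the following injectivity argument. Suppose $\sigma, \sigma' \subseteq \Sigma$ both of size $k$ give $\Psi(\mu) = \Psi(\mu')$, i.e. the two collections of sums above agree for every $c$. First I would use that $\Sigma + \Sigma$ has the maximal possible cardinality $k(2k-1)$ for the ``cross-sums'': this cardinality count is exactly what forces every element $c = a + b$ with $a \neq b$, $a,b \in \Sigma$, to have a \emph{unique} unordered representation $\{a,b\}$ as a sum of two distinct elements of $\Sigma$ (there are $\binom{2k}{2} = k(2k-1)$ unordered pairs of distinct elements, so no two of them can collide modulo $n$; one should also check that sums $a+a$ do not interfere, or restrict attention to the off-diagonal part, which is where the $\mu \sim \nu-\mu$ symmetry will enter). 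Consequently, for each unordered pair $\{a,b\}$ of distinct elements of $\Sigma$, reading off the coefficient of $e_{a+b}$ in $\Psi(\mu)$ tells us exactly whether the pair is ``split'' by $\sigma$ (one element in $\sigma$, one in the complement) or not: it is $\alpha_a\alpha_b \neq 0$ if split, and $0$ otherwise. So from $\Psi(\mu)$ we recover the complete information of which pairs $\{a,b\}$ are split between $\sigma$ and $\Sigma \setminus \sigma$. It is an elementary graph-theoretic fact that knowing the ``cut'' $\sigma$ versus $\Sigma\setminus\sigma$ in the complete graph on $\Sigma$ determines the bipartition $\{\sigma, \Sigma\setminus\sigma\}$ up to swapping the two sides (the complete graph on $\geq 2$ vertices is connected, so a cut determines the 2-coloring up to global flip). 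This is precisely the claimed injectivity modulo $\mu \sim \nu - \mu$.

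I would organize the write-up as: (i) pass to the Fourier side or, equivalently, record that $\Psi(\mu)$ determines all the cross-sum coefficients indexed by $\Sigma + \Sigma$; (ii) invoke the cardinality hypothesis to get unique representation of each off-diagonal sum, hence recover the split/non-split status of every pair; (iii) conclude via the connectedness of the complete graph that the unordered bipartition, i.e. $\mu$ up to the involution, is determined. The point that needs a little care — and the place I expect the only real friction — is the interaction between the diagonal sums $a + a \bmod n$ (which carry no information about $\sigma$, since $a$ is either in $\sigma$ or not but contributes $0$ to $\mu * \lambda$ from the pair $(a,a)$ regardless) and the off-diagonal ones: one must make sure the hypothesis ``$|\Sigma + \Sigma| = k(2k-1)$'' is being used correctly, namely that it guarantees the $k(2k-1)$ \emph{off-diagonal} unordered sums are distinct and, ideally, also distinct from (or at least separable from) the diagonal ones. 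If the intended reading is that the $k(2k-1)$ counts exactly the distinct values of $a+b$ with $a \ne b$, then the argument above goes through verbatim; if diagonal sums can coincide with off-diagonal ones, a short additional observation (e.g. that a diagonal term $\alpha_a^2 e_{2a}$ never appears in $\mu * (\nu-\mu)$ so cannot corrupt the reading) closes the gap. Either way, steps (ii)–(iii) are the substance and are short once the combinatorial bookkeeping is set up.
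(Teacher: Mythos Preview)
Your approach is correct and takes a genuinely different, somewhat cleaner route than the paper's. Both proofs rest on the same key observation---that the cardinality hypothesis forces each unordered pair $\{a,b\}\subseteq\Sigma$ with $a\neq b$ to have a unique sum $a+b\bmod n$, so the coefficient of $e_{a+b}$ in $\Psi(\mu)$ reads off whether $\{a,b\}$ is split by $\sigma$---but they diverge in how they finish. The paper argues contrapositively and locally: given non-equivalent $\mu,\mu'$ with supports $\sigma,\sigma'$, it runs a case analysis on whether $\sigma\cap\sigma'=\emptyset$ to exhibit a specific pair $(a_0,b_0)$ split by exactly one of $\sigma,\sigma'$, then tests against $\phi=e_{a_0+b_0}$. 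Your argument is global: you recover the split/non-split status of \emph{every} pair from $\Psi(\mu)$ and invoke the elementary fact that a cut in a connected graph (here the complete graph on $\Sigma$) determines the underlying $2$-coloring up to a global flip. This avoids the case split entirely and makes the role of the symmetry $\mu\sim\nu-\mu$ transparent. The paper's version is slightly more constructive (it names a single distinguishing coordinate), which could matter for a quantitative refinement, but for the bare injectivity statement your route is shorter. Your caution about the diagonal sums and the reading of the cardinality $k(2k-1)$ is well placed: the paper uses the hypothesis in exactly the form you isolate (injectivity of $(a,b)\mapsto a+b$ on unordered pairs from $\Sigma$), and your remark that diagonal terms $(a,a)$ contribute $0$ to $\mu\ast(\nu-\mu)$ already dispatches the only possible interference. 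The Fourier detour in step~(i) is unnecessary---neither you nor the paper actually needs it---so you can go straight to the convolution expansion.
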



\begin{proof}
We want to prove that if $\mu\not \sim \mu^{\prime}$, $\Psi(\mu)\neq\Psi
(\mu^{\prime})$. Note that the latter is the case if and only if there exists
a $\phi\in\C^{n}$ with $\sprod{\phi,\Psi(\mu)}\neq
\sprod{\phi,\Psi(\mu^{\prime})}$. We have
\begin{align*}
\sprod{\phi,\Psi(\mu)}  &  = \sprod{\phi,\mu*(\nu-\mu)} = \sum_{a\in[n]}
\overline{\phi_{a}} (\mu*(\nu-\mu))_{a}\\
&  = \sum_{a,b \in[n]} \overline{\phi_{a}} \mu_{\ell}(\nu-\mu)_{a-b} =
\sum_{a,b \in[n]} \overline{\phi_{a}} \mu_{b}(\nu-\mu)_{a}\\
&  = \sum_{a \in\sigma, b \in\Sigma\backslash\sigma} \overline{\phi_{a+b}}
\alpha_{a}\alpha_{b},
\end{align*}
and similarly for $\mu^{\prime}$. Now, let us assume we can prove the
existence of an element $c_{0} = a_{0}+b_{0}$ with
\begin{align}
\label{eq:magical_element}c_{0} \in\sigma^{\prime}+ (\Sigma\backslash
\sigma^{\prime})\text{ or } c_{0} \in\sigma+ (\Sigma\backslash\sigma), \text{
\emph{but not both}.}%
\end{align}
Then, choosing $\phi= e_{c_{0}}$, all terms in the above sum would be zero, so
that $\sprod{\phi,\Psi(\mu)} = 0$. However, $\sprod{\phi, \Psi(\mu')} =
\alpha_{a_{0}}\alpha_{b_{0}} \neq0$. Hence, the claim would follow.

So let us construct such an element. Since $\mu\not \sim \mu^{\prime}$, the
support $\sigma$ of $\mu$ is neither equal to the support $\sigma^{\prime}$ of
$\mu^{\prime}$, nor to $\Sigma\backslash\sigma^{\prime}$. We distinguish two cases.

\underline{Case I: $\sigma$ and $\sigma^{\prime}$ are not disjoint}. In this
case, there exists an $a_{0}\in\sigma\cap\sigma^{\prime}$. There must however
also exist a $b_{0}$ which is either in $\sigma\backslash\sigma^{\prime}$ or
$\sigma^{\prime}\backslash\sigma$ -- otherwise, $\sigma=\sigma^{\prime}$,
which we ruled out. Due to symmetry reasons, we may without loss of generality
assume that the former is the case, i.e. that there exists
\[
a_{0}\in\sigma\cap\sigma^{\prime},b_{0}\in\sigma\cap(\Sigma\backslash
\sigma^{\prime}).
\]
That means that $(a_{0},b_{0})\in\sigma^{\prime}\times(\Sigma\backslash
\sigma^{\prime})$, but that neither $(a_{0},b_{0})$ nor $(b_{0},a_{0})$ is in
$\sigma\times(\Sigma\backslash\sigma)$. Now, $\Sigma+\Sigma$ having $k(2k-1)$
elements, means that all sums $a+b$, $a,b\in\Sigma$ are distinct (excluding
that $a+b=b+a$). In other words, the operation $(a,b)\rightarrow(a+b)$ is
injective on $\Sigma\times\Sigma/\sim$, where $\sim$ identifies $(a,b)$ and
$(b,a)$. The latter together with $(a_{0},b_{0}),(b_{0},a_{0})\notin
\sigma\times(\Sigma\backslash\sigma)$ implies \eqref{eq:magical_element}.

\underline{Case II: $\sigma$ and $\sigma^{\prime}$ are disjoint.} Either
$\sigma$ or $\sigma^{\prime}$ must be nonempty, without loss of generality
$\sigma$. Then, due to the disjointness, $\emptyset\neq\sigma\sse\Sigma
\backslash\sigma^{\prime}$. However, it can't be $\Sigma\backslash
\sigma^{\prime}\sse\sigma$, since then, $\sigma=\Sigma\backslash\sigma
^{\prime}$. Hence, there exists $a_{0}$ and $b_{0}$ with
\[
a_{0}\in\sigma\cap(\Sigma\backslash\sigma^{\prime}),\quad b_{0}\in
(\Sigma\backslash\sigma)\cap(\Sigma\backslash\sigma^{\prime})
\]
That is, $(a_{0},b_{0})\in\sigma\times(\Sigma\backslash\sigma)$ but
$(a_{0},b_{0}),(b_{0},a_{0})\notin\sigma^{\prime}\times(\Sigma\backslash
\sigma^{\prime})$. Using the same argument as above, we obtain that
$a_{0}+b_{0}$ is an element with \eqref{eq:magical_element}.
\end{proof}

The idea of the proof of our main result will now be to apply the above result
with $\Sigma$ equal to the combined support of $\beta_{A}$ and $\beta_{B}$,
$\Sigma_{AB}=\supp\beta_{A}\cup\supp\beta_{B}$. In order to do so,
$\Sigma_{AB}$ must fulfill the assumption of the lemma. Therefore, we prove
the following simple bound on that being the case.

\begin{lem}
\label{lem:ineq} Let $\Sigma_{AB}=\supp\beta_{A}\cup\supp\beta_{B}$ and $E$
denote the event
\[
E=\set{ \abs{\Sigma_{AB}}=2k, \quad \abs{\Sigma_{AB} +\Sigma_{AB}}= k(2k-1).}
\]
Then
\begin{equation}
\prb{E^c}\leq\tfrac{17k^{4}}{n} \label{eq:bound}%
\end{equation}

\end{lem}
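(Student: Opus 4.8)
The plan is to bound the two "bad" events separately via a union bound. Write $E^c \subseteq E_1 \cup E_2$, where $E_1 = \{|\Sigma_{AB}| < 2k\}$ is the event that $\supp\beta_A$ and $\supp\beta_B$ are not disjoint, and $E_2 = \{|\Sigma_{AB}| = 2k, \ |\Sigma_{AB}+\Sigma_{AB}| < k(2k-1)\}$ is the event that, although the supports are disjoint, there is a nontrivial additive collision $a+b \equiv a'+b' \pmod n$ among the $2k$ support elements. I will estimate $\mathbb{P}(E_1)$ and $\mathbb{P}(E_2)$ and check that the two contributions add up to at most $17k^4/n$.

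For $E_1$: the supports $\sigma_A,\sigma_B$ are each a uniformly random $k$-subset of $[n]$, drawn independently. Conditioning on $\sigma_A$, each of the $k$ elements of $\sigma_B$ avoids $\sigma_A$ with probability $(n-k)/n$ or, more crudely, a union bound over the $k^2$ pairs $(i,j)\in\sigma_A\times\sigma_B$ gives $\mathbb{P}(E_1) \le k^2/n$ (each fixed pair collides with probability $\le 1/n$, treating $\sigma_B$'s elements as sampled without replacement, which only helps). For $E_2$: on the event that $|\Sigma_{AB}|=2k$, the total number of unordered pairs from $\Sigma_{AB}$ is $\binom{2k}{2}=k(2k-1)$, so $|\Sigma_{AB}+\Sigma_{AB}| < k(2k-1)$ exactly when two distinct unordered pairs $\{a,b\}\ne\{a',b'\}$ satisfy $a+b\equiv a'+b'\pmod n$. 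I would union bound over the $\binom{k(2k-1)}{2} \le \tfrac12 k^2(2k-1)^2 \le 2k^4$ (roughly) choices of two distinct unordered pairs; for a fixed such configuration with $\{a,b\}\cap\{a',b'\}$ of a given size, the collision $a+b\equiv a'+b'$ is a single linear constraint on the underlying uniform choices, hence has probability $O(1/n)$ — one has to be mildly careful that after fixing three of the four indices the fourth is still (nearly) uniform over a set of size $\ge n-2k$, so the conditional probability is at most $1/(n-2k)$, which for the regime of interest ($k$ much smaller than $n$) is absorbed into the constant. Summing gives $\mathbb{P}(E_2) \le c\,k^4/n$ for an explicit small constant $c$, and adding the $k^2/n$ term from $E_1$ yields the stated bound $17k^4/n$ after bookkeeping of constants (the $17$ is just a convenient clean upper bound).

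The main obstacle is purely combinatorial bookkeeping rather than anything conceptual: one must carefully enumerate the cases for how the two colliding pairs $\{a,b\}$ and $\{a',b'\}$ overlap (share one element, or are disjoint), make sure that in each case the collision event is governed by a genuinely nondegenerate linear equation in the remaining free index, and track the sampling-without-replacement conditioning so that each such event truly has probability $\le 1/(n - O(k))$. Getting a constant as small as $17$ requires being somewhat economical — e.g. noting that the pairs $(a,b)$ with $a=b$ are irrelevant since $a+b=b+a$ is always allowed, and that the overlapping-pair case actually forces $a+b\equiv a'+b'$ with a shared element to mean $b\equiv b'$, which is impossible for distinct indices, so only the disjoint-pair case and the non-disjoint-support case $E_1$ genuinely contribute. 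With that observation the count of relevant configurations drops and the constant comes out comfortably below $17$.
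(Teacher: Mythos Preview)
Your approach is correct and matches the paper's: both arguments split $E^c$ into the non-disjoint-support event and the additive-collision event and handle each by a crude union bound. The paper is in fact less careful than you are --- it simply bounds the first probability by $k^2/n\le k^4/n$ and, for the second, unions over all $(2k)^4=16k^4$ \emph{ordered} quadruples $(a,b,c,d)\in\Sigma_{AB}^4$, each collision $a+b\equiv c+d$ having probability at most $1/n$, which already gives $16k^4/n$ and hence the total $17k^4/n$; your refinements (counting unordered pairs of unordered pairs, and noting that pairs sharing an element cannot collide) are valid and would improve the constant, but are not needed to reach the stated bound.
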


\begin{proof}

Let us fix $\alpha_{A},\alpha_{B}$ (so that we have countable probability
space). To ensure that $\Sigma_{AB}$ has $2k$ entries, we need $x^{\prime}\neq
y^{\prime}\mod n$ for each $x^{\prime}\in\supp\beta_{A}$ and $y^{\prime}%
\in\supp\beta_{B}$. Each such event has a probability of $\frac{1}{n}$, and
there are less than $k^{2}$ of them. Hence, the total probability of the union
of them is not larger than $\sfrac{k^2}{n}\leq\sfrac{k^4}{n}$.

We bound the probability of $\Sigma_{AB}+\Sigma_{AB}$ not having $k(2k+1)$
elements in the same manner: Here, we get less than $(2k)^{4}=16k^{4}$ events
$a^{\prime}+b^{\prime}\neq c^{\prime}+d^{\prime}$. for different values of
$a,b,c,d$, and the probability of such each event is smaller than
$\sfrac{1}{n}$, leading to a total probability smaller than $\sfrac{16k^4}{n}$%
.
%
Combining the two bounds and averaging over $\alpha_{A},\alpha_{B}$ yields the claim.
\end{proof}

Now, we may prove Theorem \ref{th:security}.

\begin{proof}
[Proof of Theorem \ref{th:security}]Again fix $\alpha_{A},\alpha_{B}$. First,
by Lemma \ref{lem:reduction}, it is enough to argue that
\[
H(\beta_{A}\ast\beta_{B}\,|\,\beta_{A}+\gamma\beta_{B})\geq H_{\gamma
}(k)\left(  1-\frac{17k^{4}}{n}\right)  .
\]
To simplify the notation, let us write $\beta_{A}\ast(\gamma\beta_{B})=C$ and
$\beta_{A}+\gamma\beta_{B}=S$. It is clear that we are trying to bound
$H(C\,|S)$. Remembering the definition of the event $E$ from Lemma
\ref{lem:ineq}, we may split the sum defining $H(C|S)$ as follows
\begin{align*}
H(C|S)=  &  -\sum_{(c,s)\in E}\prb{C=c,S=s}\log(\prb{C=c \, \vert \, S=s})\\
&  -\sum_{(c,s)\notin E}\prb{C=c,S=s}\log(\prb{C=c \, \vert \, S=s})
\end{align*}
As for the sum over the complement of $E$, we can trivially declare a lower
bound of zero.

As for the other sum, let us fix $s$.
On the event $E$, $\Sigma=\supp S$ has $2k$ elements, and hence, $\beta_{A}$
and $\beta_{B}$ must have disjoint support. Consequently, $C$ is given by
$\mu\ast(S-\mu)$, where $\mu$ is distributed on the set $\{\sum_{x\in\sigma
}\alpha_{x}e_{x},\sigma\sse\Sigma,\abs{\sigma}=k\}$. Lemma \ref{lem:inj} now
states that on the event $E$, the map $\Psi:\mu\rightarrow\mu\ast(s-\mu)$ is
injective modulo the equivalence $\mu\sim(s-\mu)$. This means, for every value
of $c$, there exists exactly two values $a,a^{\prime}$ so that
\[
\mathbb{P}(C=c,S=s)=\mathbb{P}(\beta_{A}=a,S=s)+\mathbb{P}(\beta_{A}%
=a^{\prime},S=s)).
\]
Using the concavity of $x\mapsto-x\log(x)$, we can therefore bound
\begin{align*}
&  \mathbb{P}(C=c,S=s)\log(\mathbb{P}(C=c,S=s))\\
&  \qquad\geq-\mathbb{P}(\beta_{A}=a,S=s)\log(2\mathbb{P}(\beta_{A}=a,S=s)))\\
&  \quad\qquad-\mathbb{P}(\beta_{A}=a^{\prime},S=s)\log(2\mathbb{P}(\beta
_{A}=a^{\prime},S=s)),
\end{align*}
and by utilizing the definition of conditional probability
\begin{align*}
&  -\mathbb{P}(C=c,S=s)\log(\mathbb{P}(C=c,S=s))\\
&  \qquad\geq-\mathbb{P}(\beta_{A}=a,S=s)(\log(\mathbb{P}(\beta_{A}%
=a|S=s)))+1)\\
&  \quad\qquad+-\mathbb{P}(\beta_{A}=a^{\prime},S=s)(\log(\mathbb{P}(\beta
_{A}=a^{\prime}|S=s))+1)
\end{align*}
Summing over the values of $c$ on the right-hand side amounts to summing over
all values of $a$ on the left hand side. Hence, we get that
conditioned on $E$, the conditional entropy of $\beta_{A}\ast(s-\beta_{A})$
given $\beta_{A}+\gamma\beta_{B}$ bounded below by the conditional entropy of
$\beta_{A}$ given $\beta_{A}+\gamma\beta_{B}$, minus
\[
\sum_{a,s\in E}\mathbb{P}(\beta_{A}=a,S=s)=\mathbb{P}(E)
\]
By Theorem \ref{theo:underlying_entropy}, that conditional probability is
bounded below by $H_{\gamma}(k)$. Therefore,
\begin{align*}
&  -\sum_{(c,s)\in E}\prb{C=c,S=s}\log(\prb{C=c \, \vert \, S=s})\\
&  \geq\prb{E}(H_{\gamma}(k)-1)
\end{align*}
Now it is only left to utilize our bound on $\mathbb{P}(E^{c})$ from Lemma
\ref{lem:ineq}. Again averaging over $\alpha_{A},\alpha_{B}$ yields the final result.
\end{proof}


In order for the bound \eqref{eq:bound} to be relevant, we need $n$ to be of
the order $k^{4}$. We conjecture that this is overly pessimistic, since Lemma
\ref{lem:inj} by no means exactly characterizes when the map $\Psi$ is
injective. Note however that $n$ only appears in a logarithmic term in the
required sample complexity $\mu$ for the HiHTP algorithm to be successful.
Hence, even when $n$ is of order $k^{4}$, the HiHTP-algorithm can be used for
blindly deconvolving $h\ast Q\beta$, with $\mu/\log{\mu}\sim sk\log(n)$.

\subsubsection{\textbf{\underline{The noisy case}}}

\hfill

\label{sec:noisy} We move on to analysing the noisy case, i.e. when
$\varsigma, \sigma\neq0$. We still assume that Eve is able to solve her blind
deconvolution problem in the following sense.

\begin{ass}
\label{ass:noisysolve} Eve can from the noisy measurements $y_{E}$ determine
the correct support of
\[
h_{A\to E} \otimes\beta_{A} + h_{B\to E} \otimes\beta_{B},
\]
and the operator $C$ restricted to tensors with that support is well
conditioned: $\norm{C_{\supp T}^*C_{\supp T}- I_{\supp T}} \leq\delta$ for
some $\delta<1$.

Subsequently, she determines $T$ as $C_{\supp T}^{-1}y_{E}$. Consequently, she
gains access to
\[
T:=h_{A\rightarrow E}\otimes\beta_{A}+h_{B\rightarrow E}\otimes\beta
_{B}+\underline{n},
\]
where $\overline{n}$ is a Gaussian with mean zero and covariance
$\overline{\Sigma}:=\sigma^{2}C_{\supp T}^{-1}C_{\supp T}^{-\ast}$. 

\end{ass}

\begin{rem}
    One should note that the above recovery strategy is slightly suboptimal: Eve could in addition to the knowledge of $\supp T$ also choose to take the  low rank structure of the signal $h_{A\to E}\otimes \beta_A + h_{B\to E}\otimes \beta_B$. However, since Eve is assumed to have access to $\supp T$, the only effect this would have is to change the distribution of the anyhow small noise vector $\underline{n}$ slightly.
\end{rem}
Assuming the above attacker model, we will now proceed as in the last section
and estimate $H(\mathfrak{c} \, \vert\, h_{A\to E} \otimes\beta_{A} + h_{B\to
E} \otimes\beta_{B} + \underline{n})$. The formal result will be as follows.

\begin{theo}
\label{th:security_2} Define
\[
H_{\sigma,\varsigma}(s)=s\ln(1+2k\tfrac{\varsigma^{2}}{\sigma^{2}}).
\]
Then, for almost every value of $h$
\begin{align*}
&  H(\mathfrak{c}\,|\,h_{A\rightarrow E}\otimes\beta_{A}+h_{B\rightarrow
E}\otimes\beta_{B}+\underline{n})\geqsim\\
&  \quad\left(  1-\tfrac{17k^{4}}{n}\right)  (H_{\gamma}(k)-1)-H_{\sigma
,\varsigma}(s)
\end{align*}

\end{theo}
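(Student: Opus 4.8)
The strategy is to reduce the noisy case to the noiseless one by treating the extra Gaussian perturbation as a quantifiable loss of information, measured through differential entropy. First I would repeat the reduction of Lemma~\ref{lem:reduction}: since $h_{A\to E}=h_{AB}+n_A$ and $h_{B\to E}=\gamma h_{AB}+n_B$ are (for almost every $h_{AB}$ and almost every draw of $n_A,n_B$) nonzero vectors whose DFTs after the $\Uparrow$-lift have no vanishing entries, the maps $\beta\mapsto h_{A\to E}\otimes\beta$ and $\widehat{\beta}\mapsto\widehat{h_{A\to E}^{\Uparrow}}\cdot\widehat{\beta}$ are injective, so conditioning on $T$ carries exactly the same information as conditioning on $\beta_A + \gamma\beta_B + \underline{m}$ for a suitably transformed Gaussian $\underline{m}$ (the transformation being elementwise division by $\widehat{h_{A\to E}^{\Uparrow}}$ in the Fourier domain, then inverse DFT). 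Then $\mathfrak{c}$ is, up to the same injective reparametrisations, $\beta_A\ast\beta_B$, and the quantity to bound is $H(\beta_A\ast\beta_B \mid \beta_A+\gamma\beta_B+\underline{m})$.

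Next I would invoke the chain rule / data-processing structure for conditional entropy: for any random variables $X,Y,Z$, $H(X\mid Y,Z)\le H(X\mid Y)$, and more usefully $H(X\mid Y')\ge H(X\mid Y) - [\,H(Y)-H(Y\mid Y')\,] = H(X\mid Y) - I(Y;Y' ... )$ — precisely, if $Y' = Y + \text{noise}$ then $H(X\mid Y')\ge H(X\mid Y) - I(Y;Y'|X)$. Applying this with $X=\beta_A\ast\beta_B$, $Y=\beta_A+\gamma\beta_B$ and $Y'=\beta_A+\gamma\beta_B+\underline{m}$, the first term is bounded below by $(1-\tfrac{17k^4}{n})(H_\gamma(k)-1)$ by Theorem~\ref{th:security}, and it remains to upper bound the mutual information $I(Y;Y'\mid X)$ — equivalently, the information the added Gaussian leaks. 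Since $\underline{m}$ (and the original $n_A,n_B$) are supported on $\supp h_{AB}$, which has only $s$ coordinates, this leakage is confined to an $s$-dimensional subspace, and a standard Gaussian-channel computation ($\tfrac12\log\det(I+\text{SNR})$, with per-coordinate SNR on the order of $\|\beta_A\|_\infty^2\cdot\text{(number of overlapping supp entries)}\cdot\varsigma^2/\sigma^2$, and $\|\beta_{A/B}\|$-entries at most $1$ while $|\supp\beta_{A/B}|=k$) gives the bound $H_{\sigma,\varsigma}(s)=s\ln(1+2k\tfrac{\varsigma^2}{\sigma^2})$. Subtracting this yields the claimed inequality.

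The main obstacle is making the SNR bookkeeping precise: one must carefully track how the filter noise $n_A\ast Q\beta_A + n_B\ast Q\beta_B$ propagates through Eve's deconvolution (Attacker model~\ref{ass:noisysolve} tells us she inverts $C_{\supp T}$, so the effective noise covariance is $\overline\Sigma=\sigma^2 C_{\supp T}^{-1}C_{\supp T}^{-\ast}$, well-conditioned by assumption) versus the ``useful'' signal noise level $\varsigma$, and then argue that the information about $\beta_A\ast\beta_B$ contained in this noisy observation beyond what $\beta_A+\gamma\beta_B$ already gives is at most the capacity of an $s$-parallel Gaussian channel with the stated SNR. The factor $2k$ enters because $\underline n$ interacts with both $\beta_A$ and $\beta_B$, each contributing up to $k$ unit-modulus-scale terms in the relevant convolution; bounding $\|\beta_{A}\|^2+\|\gamma\beta_B\|^2\le (1+\gamma^2)k\le 2k$ (using $\alpha_c\in[0,1]$ and $\gamma\le1$) is what produces the argument of the logarithm. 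I would also need to confirm that averaging over $\alpha_A,\alpha_B$ and over the draw of $h_{AB}$ preserves the bound, exactly as in the proofs of Lemmas~\ref{lem:reduction} and~\ref{lem:ineq}; this part is routine given the almost-sure statements there. The $\geqsim$ in the statement signals that the constant and lower-order terms in this SNR estimate are not tracked sharply, which is acceptable for the intended qualitative conclusion: security degrades only mildly as long as $\varsigma\lesssim\sigma$ and $s\ll k$.
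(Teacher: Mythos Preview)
Your overall strategy---invoke Theorem~\ref{th:security} for the main term and subtract a Gaussian-channel capacity of order $s\ln(1+2k\varsigma^2/\sigma^2)$ to account for the extra leakage through the filter noise---is exactly the paper's. Two of your intermediate steps, however, do not go through as stated.

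First, the reduction of $T$ to ``$\beta_A+\gamma\beta_B+\underline m$'' by dividing out $\widehat{h_{A\to E}^{\Uparrow}}$ fails: since $h_{A\to E}\neq h_{B\to E}$ (they differ by the random $n_A, n_B$ and by the scale $\gamma$), the tensor $T=h_{A\to E}\otimes\beta_A+h_{B\to E}\otimes\beta_B+\underline n$ does not lie in the range of $h_{A\to E}\otimes(\cdot)$; moreover $h_{A\to E}$ is itself random (only $h_{AB}$ is frozen in the ``almost every $h$'' clause), so it cannot serve as a deterministic injective map on the conditioning side. The paper therefore leaves $T$ intact, reduces only $\mathfrak{c}$ to $C:=\beta_A\ast\beta_B$ via Lemma~\ref{lem:reduction}, and brings in $S:=\beta_A+\gamma\beta_B$ through the trivial bound $H(C\,|\,T)\ge H(C\,|\,T,S)$. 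Second, the inequality $H(X\,|\,Y')\ge H(X\,|\,Y)-I(Y;Y'\,|\,X)$ you invoke is false in general (take $X=Y'$ with $Y$ independent of both: left side $0$, right side $H(X)$). The correct identity, and the one the paper uses, is the chain rule $H(C\,|\,T,S)=H(T\,|\,C,S)-H(T\,|\,S)+H(C\,|\,S)$, so the penalty is $I(C;T\,|\,S)$, not $I(S;T\,|\,C)$. The paper then bounds $H(T\,|\,C,S)\ge H(T\,|\,\beta_A,\beta_B)$ and evaluates both conditional entropies of $T$ as Gaussian differential entropies with covariance $\overline\Sigma+\varsigma^2\id_{\supp h}\otimes(\beta_A\beta_A^*+\beta_B\beta_B^*)$; the resulting $\log\det$ difference has $2s$ nontrivial eigenvalues, each at most $k\varsigma^2/\sigma^2$ up to the conditioning factor of $\overline\Sigma$ (handled via Sylvester's inertia law), which produces $H_{\sigma,\varsigma}(s)$. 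Your capacity heuristic is the right picture for this last step, but it must be attached to the correct mutual information and carried out on the full tensor $T$ rather than on a putative $n$-dimensional reduction.
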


Before we prove the claim some comments are in order again.

\begin{rem}
In the low noise regime, $\tfrac{\varsigma^{2}}{\sigma^{2}}k\ll1$, we can
approximate
\[
H_{\sigma,\varsigma}(s)\approx2ks\tfrac{\varsigma^{2}}{\sigma^{2}}.
\]
In the regime $\gamma\approx1$ and $n\geqsim k^{4}$, the lower bound on the
relative entropy can hence be bounded below (up to constants) by
\[
k(1+\gamma-2s\tfrac{\varsigma^{2}}{\sigma^{2}})
\]
We see that if $\varsigma^{2}\geq\theta s\sigma^{2}$ for some $\theta<1$, the
relative entropy of the codeword is still proportional to $k$. Hence, if the
devations $n_{A}$, $n_{B}$ are about a factor $s^{\sfrac{1}{2}}$ smaller then
the measurement noise, they will 'drown', and the codeword is kept secret.

That the filter deviations need to be small is not really restrictive
since it is only in this regime where Eve has a chance to get a really good
estimate for $h$, which she needs to recover the codeword.
\end{rem}

\begin{proof}
Using the same argument as in Lemma \ref{lem:reduction}, we see that it is
enough to bound
\[
H(\beta_{A}\ast\beta_{B}\,|\,h_{A\rightarrow E}\otimes\beta_{A}%
+h_{B\rightarrow E}\otimes\beta_{B}+\underline{n})
\]
from below. Using, in addition to the notation $T$, the notations $C=\beta
_{A}\ast\beta_{B}$ and $S=\beta_{A}+\gamma\beta_{B}$, we see that we are
trying to bound $H(C\,|,T)$. Let us begin with the following trivial bound
\[
H(C\,|\,T)\geq H(C\,|T,S).
\]
This bound is intuitively quite sharp for small values of $\sigma$ and
$\varsigma$, since $T$ essentially is a noisy version of $S$. The Bayes rule,
conditioned on $S$, now gives
\[
H(C\,|\,T,S)=H(T\,|\,C,S)-H(T\,|\,S)+H(C\,|\,S).
\]
We now analyse each of these terms individually. First, by Lemma
\ref{lem:reduction}, we almost surely over the draw of $h$ have
\begin{align*}
H(C\,|\,S)  &  =H(\beta_{A}\ast\beta_{B}\,|\,h\otimes(\beta_{A}+\gamma
\beta_{B}))\\
&  =H(\beta_{A}\ast\beta_{B}\,|\,\beta_{A}+\gamma\beta_{B})\\
&  \geq\left(  1-\tfrac{17k^{4}}{n}\right)  (H_{\gamma}(k)-1),
\end{align*}
where we in the final step in particular applied Theorem \ref{th:security}.

Now let us examine $H(T\,|\,C,S)$. We can surely bound
\[
H(T\,|\,C,S)\geq H(T\,|\,C,S,\beta_{A},\beta_{B})=H(T\,|\,\beta_{A},\beta
_{B}).
\]
Conditioned on the values for $\beta_{A}$ and $\beta_{B}$, $T$ is a Gaussian,
with mean $h\otimes(\beta_{A}+\gamma\beta_{B})$ and covariance
\[
\Sigma_{\beta_{A},\beta_{B}}:=\overline{\Sigma}+\varsigma^{2}\id_{\supp
h}\otimes(\beta_{A}\beta_{A}^{\ast}+\beta_{B}\beta_{B}^{\ast}).
\]
We can use this fact to bound $H(T\,|\,S)$. If we denote the PDF of a Gaussian
with covariance $\Sigma$ and mean $h_{AB}\otimes(\beta_{A}+\gamma\beta_{B})$
with $\psi_{\Sigma_{\beta_{A},\beta_{B}}}$, the density of $T$ conditioned on
$S$ will be equal to
\[
\varphi(v)=\mathbb{E}^{\prime}(\varphi_{\Sigma_{\beta_{A},\beta_{B}}}(v)),
\]
where the expected value is over the draw of (the support) of $\beta_{A}$. Due
to the concavity of the function $x\mapsto-x\log(x)$, we may now argue that
\[
-\log(\varphi(v))\varphi(v)\leq\mathbb{E}^{\prime}\left(  -\log(\varphi
_{\Sigma_{\beta_{A},\beta_{B}}}(v))\varphi_{\Sigma_{\beta_{A},\beta_{B}}%
}(v)\right)  ,
\]
and from that
\[
H(T\,|\,S)\leq\mathbb{E}^{\prime}(H(T\,|\,S,\beta_{A},\beta_{B}))=\mathbb{E}%
^{\prime}(H(T\,|\,\beta_{A},\beta_{B}))
\]
Now, a Gaussian with covariance $\Sigma$ has the entropy
\[
\tfrac{1}{2}\log(\det(2\pi e\Sigma)).
\]
Consequently,
\begin{align*}
&  H(T\,|\,C,S)-H(T\,|\,S)\geq\\
&  \quad\tfrac{1}{2}\left(  \log(\det(2\pi e\Sigma_{\beta_{A},\beta_{B}%
}))-\mathbb{E}^{\prime}\log(\det(2\pi e\Sigma_{\beta_{A},\beta_{B}}))\right)
\end{align*}
An elementary calculation shows that $\Sigma_{\beta_{A},\beta_{B}}$ equals
\[
\overline{\Sigma}(\id_{\supp T}+\varsigma^{2}\overline{\Sigma}^{-\sfrac{1}{2}}%
\left(  \id_{\supp h_{AB}}\otimes(\beta_{A}\beta_{A}^{\ast}+\beta_{B}\beta
_{B}^{\ast})\right)  \overline{\Sigma}^{-\sfrac{1}{2}}).
\]
Let us estimate
\begin{equation}
\det(\id_{\supp T}+\varsigma^{2}\overline{\Sigma}^{-\sfrac{1}{2}}\left(
\id_{\supp h_{AB}}\otimes(\beta_{A}\beta_{A}^{\ast}+\beta_{B}\beta_{B}^{\ast
})\right)  \overline{\Sigma}^{-\sfrac{1}{2}})\label{eq:det}%
\end{equation}
which we do so via directly estimating the eigenvalues of the matrix. For
that, it is clearly enough to control the ones of $\varsigma^{2}%
\overline{\Sigma}^{-\sfrac{1}{2}}\left(  \id_{\supp h_{AB}}\otimes(\beta
_{A}\beta_{A}^{\ast}+\beta_{B}\beta_{B}^{\ast})\right)  \overline{\Sigma
}^{-\sfrac{1}{2}}$ First, note that when $\overline{\Sigma}=\id$, the
eigenvalues are easy to write down: since $\beta_{A}$ and $\beta_{B}$ have
disjoint supports, $s$ of the eigenvalues are equal to $\varsigma
^{2}\norm{\beta_A}^{2}$, $s$ are equal to $\varsigma\norm{\beta_B}^{2}$, and
the remainding $s(k-2)$ are equal to $0$. Now, $\overline{\Sigma
}^{-\sfrac{1}{2}}$ is, by assumption \ref{ass:noisysolve}, a matrix with
spectrum in $[\sigma^{-2}(1-\delta),\sigma^{-2}(1+\delta)]$. The quantitative
version of Sylvester's law of inertia \cite{ostrowski1959quantitative} now
shows that when $\overline{\Sigma}\neq0$, the first and second sets of
eigenvalues lie in between $\tfrac{\varsigma^{2}}{\sigma^{2}}%
\norm{\beta_A}^{2}(1-\delta)$ and $\tfrac{\varsigma^{2}}{\sigma^{2}%
}\norm{\beta_A}^{2}(1-\delta)$, the second between $\tfrac{\varsigma^{2}%
}{\sigma^{2}}\norm{\beta_B}(1-\delta)$ $\tfrac{\varsigma^{2}}{\sigma^{2}%
}\norm{\beta_B}(1+\delta)$, and the rest are still equal to zero. We conclude
that the determinant \eqref{eq:det} is bigger than
\[
(1+(1-\delta)\tfrac{\varsigma^{2}}{\sigma^{2}}\norm{\beta_A}^{2}%
)^{s}(1+(1-\delta)\tfrac{\varsigma^{2}}{\sigma^{2}}\norm{\beta_B}^{2})^{s}%
\geq1
\]
and smaller than
\begin{align*}
&  (1+(1+\delta)\tfrac{\varsigma^{2}}{\sigma^{2}}\norm{\beta_A}^{2}%
)^{s}(1+(1+\delta)\tfrac{\varsigma^{2}}{\sigma^{2}}\norm{\beta_B}^{2})^{s}\\
&  \qquad\leq(1+k\tfrac{\varsigma^{2}}{\sigma^{2}}(1+\delta))^{2s},
\end{align*}
where we applied the trivial bound $\norm{\beta_A}^{2},\norm{\beta_B}^{2}\leq
k$ in the final step. This shows that
\begin{align*}
&  \tfrac{1}{2}\left(  \log(\det(2\pi e\Sigma_{\beta_{A},\beta_{B}%
}))-\mathbb{E}^{\prime}\log(\det(2\pi e\Sigma_{\beta_{A},\beta_{B}}))\right)
\geq\\
&  \ \tfrac{1}{2}\big(\log(\det(2\pi e)\overline{\Sigma})+\log(1)\\
&  \ \ -\log(\det(2\pi e\overline{\Sigma}))-\log((1+k\tfrac{\varsigma^{2}%
}{\sigma^{2}}(1+\delta))^{2s})\big)=-H_{\sigma,\varsigma}(s),
\end{align*}
where we in particular estimated $(1+\delta)\leq2$. The claim has been proven.
\end{proof}

We can now summarize the result in the following theorem.

\begin{theo}
Give $m$ rounds of the above procedure and employing a universal hash function
taken for the collective quantized common secrets, a uniformly destributed
secret key with rate per round%
\[
k\geq\beta H_{\gamma}(k)-H_{\sigma,\varsigma}(s)
\]
for some $0<\beta<1$ is achievable provided $m,n$ and $\mu$ are sufficiently large.
\end{theo}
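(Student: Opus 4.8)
The plan is to run a standard privacy-amplification argument over the concatenation of the $m$ rounds, feeding it the correctness statement and the security bound of Theorem~\ref{th:security_2}. Although the common secret $\mathfrak{c}\in\C^{\mu n}$ is nominally continuous, observe that on the high-probability event $E$ of Lemma~\ref{lem:ineq} it encodes, injectively modulo the symmetry $\mu\sim\nu-\mu$ of Lemma~\ref{lem:inj}, the combinatorial partition of $\Sigma_{AB}$ into $\supp\beta_A$ and $\supp\beta_B$ -- and this discrete label is exactly where the conditional entropy analysed in Theorems~\ref{th:security}--\ref{th:security_2} resides, the continuous amplitudes being already (up to the $\gamma$-ambiguity) revealed to Eve. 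Hence the "quantize" step of the protocol may be taken to record that label, producing per round a discrete variable $X^{(i)}$ with $H(X^{(i)}\mid T^{(i)})\geq R$, where, for almost every channel $h$, $R:=(1-\tfrac{17k^4}{n})(H_\gamma(k)-1)-H_{\sigma,\varsigma}(s)$ by Theorem~\ref{th:security_2}. Conditioning on the possibly static $h$ (for which this holds almost surely) and using that $\sigma_A,\sigma_B$ are drawn afresh and independently each round, the pairs $(X^{(i)},T^{(i)})$ are conditionally i.i.d., so the chain rule gives $H(X^{(1:m)}\mid T^{(1:m)})\geq mR$.

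The crucial step -- and the one I expect to be the main obstacle -- is to upgrade this Shannon bound to a \emph{smooth min-entropy} bound, since the Leftover Hash Lemma needs min- rather than Shannon entropy. For this I would invoke the classical asymptotic equipartition property: restricting to the jointly typical set costs only $\epsilon$ in statistical distance and yields $H_{\min}^{\epsilon}(X^{(1:m)}\mid T^{(1:m)})\geq mR-O(\sqrt{m\log(1/\epsilon)})$. Two technical points have to be folded into the smoothing parameter here: Eve's datum $T^{(i)}$ coincides with the quantity analysed in Theorem~\ref{th:security_2} only on the high-probability event that her deconvolution succeeds (the assumed attacker model), and the per-round bound itself holds only on the event $E$; together with the event that any of the $2m$ HiHTP instances of Alice and Bob fails, these contribute a per-round failure probability which is made negligible by taking $\mu$ and $n$ large, via \eqref{eq:mubound} and Lemma~\ref{lem:ineq}, so that their aggregate effect over $m$ rounds stays within an arbitrarily small fraction of $mR$.

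Finally I would apply the Leftover Hash Lemma: a universal hash function $g$ applied to $X^{(1:m)}$, with output length $\ell=H_{\min}^{\epsilon}(X^{(1:m)}\mid T^{(1:m)})-2\log(1/\epsilon)$, produces a string that is $3\epsilon$-close to uniform and independent of $(T^{(1:m)},g)$, i.e.\ a uniformly distributed secret key that reveals essentially nothing to Eve. Since Alice and Bob hold the identical value of $\mathfrak{c}$ in every round on the event that all their HiHTP instances succeed (the earlier Proposition, $\mathfrak{c}=\mathfrak{c}'$), they derive the same key; a vanishing information-reconciliation overhead -- or simply discarding the few rounds flagged as unreliable -- covers any residual disagreement and is negligible in the rate. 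Dividing $\ell$ by $m$ and letting $m\to\infty$ yields an achievable per-round rate $R-o(1)$; taking $n$ large pushes $1-\tfrac{17k^4}{n}$ toward $1$ and makes the per-round failure probability negligible, and the additive losses $-1$ and $o(1)$ are absorbed into a multiplicative constant, so the per-round rate is at least $\beta H_\gamma(k)-H_{\sigma,\varsigma}(s)$ for some $0<\beta<1$ once $m$, $n$ and $\mu$ are sufficiently large, with $\mu$ entering only through \eqref{eq:mubound} to guarantee correctness.
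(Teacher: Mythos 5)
Your proposal follows essentially the same route as the paper: take the per-round conditional entropy bound from Theorem~\ref{th:security_2}, use the asymptotic equipartition property over $m$ independent rounds to pass from Shannon to (smooth) min-/collision entropy at cost $o(m)$, quantize in an entropy-preserving way, and finish with privacy amplification via a universal hash, absorbing all losses into the factor $\beta<1$. The only notable difference is that you resolve the continuous-versus-discrete issue by identifying the relevant discrete content of $\mathfrak{c}$ (the support partition on the event $E$), where the paper instead posits a one-to-one quantizer and explicitly waves the uncountability subtlety away -- your handling is the slightly more careful of the two, but the argument is the same.
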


\begin{proof}
The proof is pretty standard. For some $k$, fix the signal space dimension $n,\mu$ such that any
probability of error in the reconciled key is small say below some
$\epsilon>0$ guided by equations \ref{eq:mubound} and \ref{eq:bound} with only logarithmic efficiency
penalty in $n$. By Theorem \ref{th:security_2}, for a single round the entropy of the secret
$\mathfrak{c}$ is given by%
\[
H(\mathfrak{c}\,|\,h_{A\rightarrow E}\otimes\beta_{A}+h_{B\rightarrow
E}\otimes\beta_{B}+\underline{n})\geq\beta H_{\gamma}(k)-H_{\sigma,\varsigma
}(s).
\]
for any $\beta<1$ such that $\beta H_{\gamma}(k)-H_{\sigma,\varsigma}(s)>0$.
After $m$ rounds we collect the secret $\mathfrak{c}^{(m)}=\mathfrak{c}%
_{1},...,\mathfrak{c}_{m}$ and it is well-known that the collision and
min-entropy is at least
\begin{align*}
& H_{c}(\mathfrak{c}^{(m)}\,|\,h_{A\rightarrow E}\otimes\beta_{A}%
+h_{B\rightarrow E}\otimes\beta_{B}+\underline{n})\\
& \geq H_{\infty}(\mathfrak{c}^{(m)}\,|\,h_{A\rightarrow E}\otimes\beta
_{A}+h_{B\rightarrow E}\otimes\beta_{B}+\underline{n})\\
& \geq m\left(  \beta H_{\gamma}(k)-H_{\sigma,\varsigma}(s)-c_{1}\left(
m\right)  -c_{2}\left(  \epsilon\right)  \right)  =:k^{\prime},
\end{align*}
by virtue of the asymtotic equipartition property \cite[Lemma 4.4]{pls_book}.
Here $\lim_{m\rightarrow\infty}c_{1}\left(  m\right)  =0$ and $\lim
_{\epsilon\rightarrow\infty}c_{2}\left(  \epsilon\right)  =0$. We leave out
the technical subtlety that Eve's observations are uncountable.

Then we can
apply any (vector) quantizer $\mathcal{Q}$ to common secrets $\mathfrak{c}%
^{(m)},\mathfrak{c}^{\prime(m)}$ so that $\mathcal{Q}:C^{mn\mu}\rightarrow
\{0,1\}^{\theta mn\mu},\mathfrak{c}^{(m)},\mathfrak{c}^{\prime(m)}%
\hookrightarrow\mathfrak{c}_{b},\mathfrak{c}_{b}^{\prime}$, for some natural
$\theta\geq1$. The quantizer is required to be one-to-one so that the
collision entropy or min-entropy is the same. Finally applying a a universal
hash function $\mathcal{G}:\{0,1\}^{\theta mn\mu}\rightarrow\{0,1\}^{k^{\prime
}},\mathfrak{c}_{b},\mathfrak{c}_{b}^{\prime}\hookrightarrow\mathfrak{k}%
_{b},\mathfrak{k}_{b}^{\prime}$ where secret keys $\mathfrak{k}_{b}%
,\mathfrak{k}_{b}^{\prime}$ are almost uniformly distributed over the
$2^{k^{\prime}}$ possibilities \cite[Theorem 4.4]{pls_book} and $\mathfrak{k}%
_{b}=\mathfrak{k}_{b}^{\prime}$ with probability below $\epsilon$ for
$m,n,\mu$ large enough.
\end{proof}

\begin{rem}
Notably, before closing this section, we would like to emphasize that our
attacker model is in many ways rather beneficial for Eve:
\begin{itemize}
\item We assume that she can perfectly recover the support of $h_{AB}%
\otimes(\beta_{A}+\gamma\beta_{B})$, which is an object of higher effective
dimension than the individual contributions $h_{AB}\otimes\beta_{A}$ and
$h_{AB}\otimes\beta_{B}$ that Bob and Alice, respectively, need to recover.
Note however that the dimension of Eve's measurement is the same Alice's and
Bob's, respectively. Hence, if Alice and Bob operate in the 'phase transition
regime' for them \cite{jung2018blind}, which could be monitored, any key
recovery of Eve attempt must necessarily fail. However the phase transition
regime is hard to quantify. We have incorporated some simulations related to
this in Section \ref{sec:experiments}.

\item As was mentioned, the assumption that the filter(s) $h_{A\rightarrow
E},h_{B\rightarrow E}$ is equal (are close) to $h_{AB}$ completely removes the
need for Eve to estimate $h_{AB}$ once she has solved her blind deconvolution
problem. In general, both $h_{A\rightarrow E},h_{B\rightarrow E}$ can be
only weakly correlated to $h_{AB}$, which would intuitively make it hard for Eve to
estimate $\mathfrak{c}$.

\item Finally, Eve is given access to the codebook $Q$, which Alice and Bob
could have kept secret. This is an interesting line of future work.
\end{itemize}

All of these other possible layers of security are complicated to quantify
theoretically, and we deem an analysis of them beyond the scope of this work.
This includes the potential use of massive MIMO \cite{Wunder2019_TCOM}.
\end{rem}

Let us now step to the experiment section.

\section{Experiments}

\label{sec:experiments}

In order to practically verify our assumptions about the shared secret that
Alice and Bob compute on behalf of the output of the HiHTP algorithm, we
conducted several experiments with different parameters to simulate a
real-life execution of our key generation protocol.

\subsection{Signal and channel sparsity}

As the relation between sparsity and vector dimension is the decisive
parameter for the success of the HiHTP algorithm and hence our protocol, we
tested multiple settings for these criteria. We ran our experiments in a
simulation, mimicking a communication over a wireless channel. Firstly, for
both Alice and Bob, normalized $k$-sparse signals $\beta_{A/B}\in
\mathbb{C}^{n}$ were drawn with the locations of the non-zeros distributed
uniformly and entries drawn from the standard normal distribution. For the
signal dimension $n$, we chose \textit{i)} 128 and \textit{ii)} 200
intuitively for two different experiment runs. The channel dimension $\mu$ was
set to \textit{i)} 100 and \textit{ii)} 160, while the $s$ channel
coefficients were also drawn randomly at random supports, resulting in a
$s$-sparse channel $h\in\mathbb{C}^{\mu}$. As the channel is reciprocal
between two communicating partners, we used this channel model for both
$h_{A\rightarrow B}=h_{B\rightarrow A}=h_{AB}$. Additionally, we randomly drew
a codebook $Q\in\mathbb{C}^{\mu\times n}$. For experimental purposes, we did
not fix the parameters $k$ (signal sparsity) and $s$ (channel sparsity) to
specific values, such that we could elaborate the effect of changing these
parameters on the resulting shared secret. In fact, the signal sparsity is a
parameter, which the users can independently choose anyway, whereas the
channel sparsity is given by nature. Although the channel sparsity is not a
parameter one can directly influence in real scenarios, it affects the success
probability of the blind deconvolution. To simulate the signal exchange over
the channel, we applied the previously introduced \textit{circular
convolution} between the channel $h_{AB}$ and $Q\beta_{A/B}$ to both sides
separately and used the result $y_{A/B}=h_{AB}\ast Q\beta_{B/A}+n_{B/A}$ as an
input to the \textit{HiHTP} algorithm at the respectively other party. With
$h_{AB}\otimes\beta_{A/B}$ being the output of the algorithm, we do not need
to separate the two factors (e.g. via SVD), as we directly use this solution
to compute the secret $\mathfrak{c}$. As a metric for the success of our
protocol, we considered the Root Mean Squared Error (RMSE) between the
normalized computed secrets of Alice and Bob. Ideally, we want to achieve a
low RMSE, meaning that Alice's and Bob's computed secrets are indeed
(approximately) equal to an extent.

\begin{figure}[ptb]
\centering
\begin{subfigure}[b]{0.24\textwidth}
\centering
\includegraphics[width=0.7\linewidth]{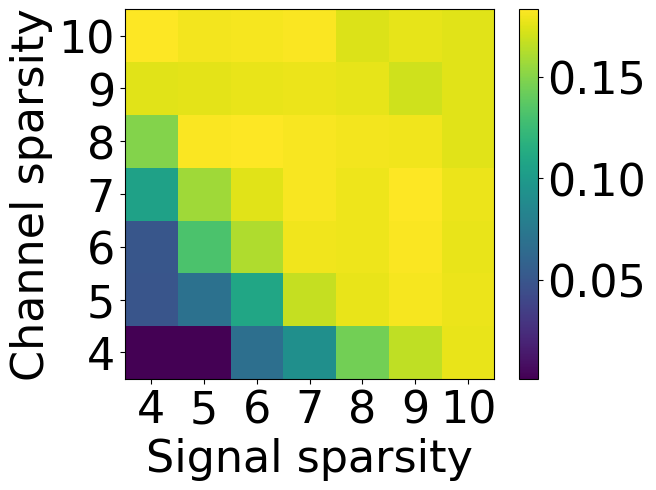}
\caption{Signal dimension $n=128$, channel dimension $\mu=100$}
\label{fig:simulation_channel}
\end{subfigure}
\begin{subfigure}[b]{0.24\textwidth}
\centering
\includegraphics[width=0.7\linewidth]{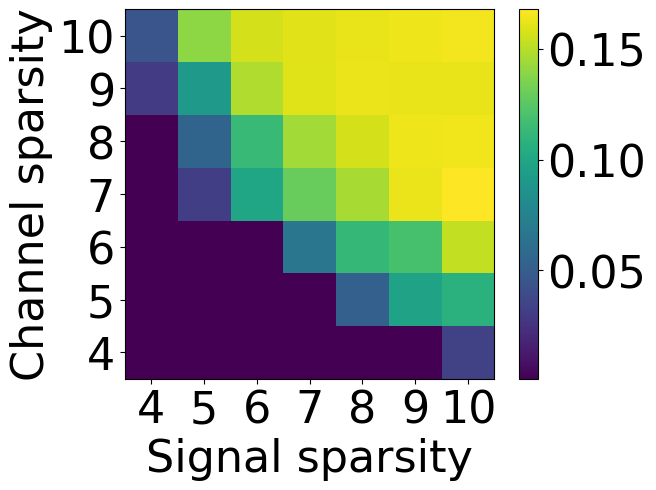}
\caption{Signal dimension $n=200$, channel dimension $\mu=160$}
\label{fig:simulation_dimensions}
\end{subfigure}
\caption{RMSE of Alice's and Bob's shared secret under different settings for
signal and channel sparsity.}%
\label{fig:simulation_sparsities}%
\end{figure}

We continued by defining a range of values for $k$ and $s$, starting at 4 up
to 10. In addition, we used a Signal-to-noise (SNR) ratio of $30$~dB scaling
randomly drawn noise vectors that are added to the transmitted signal to
simulate Additive White Gaussian Noise (AWGN). The SNR quantifies the amount
of noise in relation to the signal and depends on channel circumstances, such
as interference or signal fading effect. With every combination of signal and
channel sparsity, we ran the simulation 50 times, computed the RMSE between
the two resulting shared secrets, normalized between 0 and 1, and stored its
average over these 50 runs. It is to note that in every single run, the
signals $\beta_{A/B}$, the channel $h_{AB}$, as well as the codebook $Q$ were
chosen randomly. Figure~\ref{fig:simulation_channel} depicts the RMSE across
different signal and channel sparsity settings. It becomes clear that with
more non-zero entries in one component, i.e. channel or signal, the RMSE rises
rapidly and saturates at around 0.175 as the similarity of Alice's and Bob's
computed key shrinks. This of course also relies on the dimensions of the
signal and channel. With higher dimensions, the protocol succeeds also with
more non-zero entries, as seen in Figure~\ref{fig:simulation_dimensions} for
signal dimension $n=200$ and channel dimension $\mu=160$. In fact, the success
of our protocol relies on the ability of the HiHTP algorithm to correctly
deconvolve the signal input and recover $h\otimes\beta_{A/B}$, which is given
with a high probability if both signals are sparse respective to their dimension.

\begin{figure}[ptb]
\centering
\includegraphics[width=0.6\linewidth]{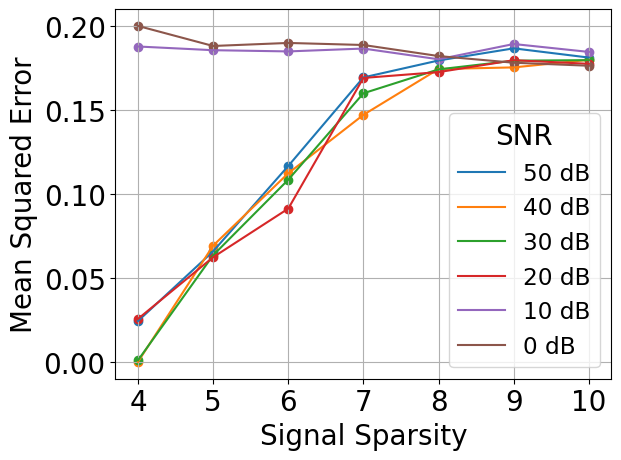}\caption{RMSE of Alice's
and Bob's shared secret under different settings for SNR and signal sparsity.}%
\label{fig:simulation_noise}%
\end{figure}

\subsection{Noise effects}

In wireless communication scenarios, we need to cope with the existence of
noise, which is added to the received signal. In a second experiment, we test
the influence of noise on the functionality of our protocol. The randomly
drawn noise vector is scaled by factors of $10^{-5}$ to $10^{-0}$ and added to
the transmitted signal, representing a SNR of 50~dB (low noise power) to 0~dB
(same noise power as signal power). In addition, we fixed the signal dimension
to $n=128$, the channel dimension to $\mu=100$ and the channel sparsity to
$s=5$ for this test. Figure~\ref{fig:simulation_noise} shows the RMSE of
different settings for signal sparsity and SNR ratio.
As expected, with a higher noise levels, the RMSE is high even for a signal
sparsity of 4. Especially when the SNR is $0$~dB or $10$~dB, the computed
secrets differ significantly and the key generation method fails.
Nevertheless, the results for a SNR from $20$~dB upwards show that the
protocol works even with a notable amount of noise.

\subsection{Signal recovery attack}

\label{sec:signalattack}

\begin{figure}[ptb]
\centering
\includegraphics[width=0.7\linewidth]{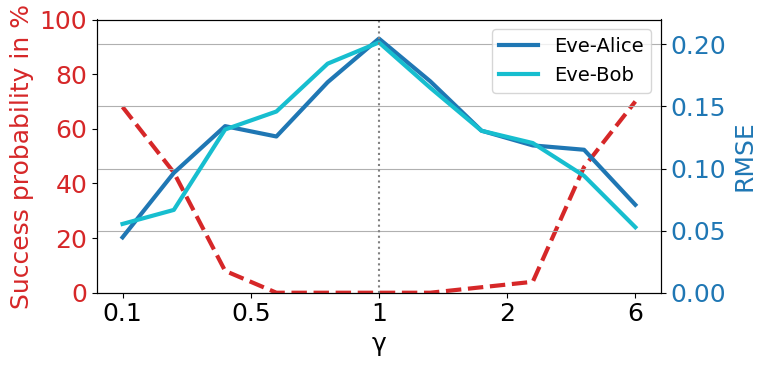} \caption{Success
probability of the attack described in Section \ref{sec:signalattack} (red)
and RMSE between Eve's computed key and Alice's (blue) and Bob's (cyan) key.
The attack was counted as successful, if the absolute element-wise difference
did not exceed $10^{-4}$.}%
\label{fig:attack}%
\end{figure}

As described in Section~\ref{sec:wdh_scheme}, we assume that Eve receives a
superposition of Alice's and Bob's signal $Q(\beta_{A} + \gamma\beta_{B})$
(after the deconvolution process), with respect to the strength of the channel
to either side. The weight is balanced by the factor $\gamma$, describing the
power difference of Alice's and Bob's signal in the superposition Eve
observes. Hence, with $\gamma= 1$, the amplitudes of the superposed signals
are equal, whereas a deviation to either side leads to slight dominance of one
side's signal over the other.

Let us first evaluate the ability of Eve to separate Alice's and Bob's
contributions to the received superposition and thus being able to compute the
shared secret in the noiseless case, by empirically evaluating the following
attack:

\begin{enumerate}
[(i)]

\item Eve solves the blind deconvolution problem
\[
y_{E}=h_{AB}\ast Q(\beta_{A}+\gamma\beta_{B})
\]
for $h_{AB}\otimes(\beta_{A}+\gamma\beta_{B})$. She can again do so using
HiHTP, but with a signal sparsity parameter of $2k$ instead of just $k$, in
order to receive all $2k$ signal peaks and not just $k$ from one party.

\item She now calculates an SVD of $h_{AB}\otimes(\beta_{A}+\gamma\beta_{B})$,
to recover $h_{AB}$ and $\beta_{A}+\gamma\beta_{B}$.

\item She now sorts the elements of $\beta_{A}+\gamma\beta_{B}$ in descending
amplitude order and assigns the $k$ highest values to one party and the $k$
next highest to the other.
\end{enumerate}

After this step, Eve has successfully reconstructed the signals $\beta_{A}$
and $\beta_{B}$ and has now every component necessary to compute the shared
secret, i.e. $\beta_{A}$, $\beta_{B}$ and $h_{AB}$. She can now compute the
secret as $\mathfrak{c}^{\prime\prime}=\widehat{\mathrm{vec}(h_{AB}^{\Uparrow
}\otimes\beta_{B})}\cdot\widehat{\mathrm{vec}(e_{0}^{\mu}\otimes\beta_{A}%
)}=\widehat{h_{AB}^{\Uparrow}}\cdot\widehat{\beta_{A}^{\uparrow}}\cdot
\widehat{\beta_{B}^{\uparrow}}$ or vice versa.

In order to elaborate the fatality of this attack, we implement it and test it
for different values for $\gamma$. We want to figure out, to what extent the
parameter $\gamma$ influences the ability to compute the shared secret by
applying the signal recovery attack. For this, we defined values for $\gamma$
in a range of 0.1 up to 6 and let Eve try the attack. The parameters $Q$,
$h_{AB}$, $\beta_{A}$ and $\beta_{B}$ are generated just as before, and the
SNR was set to 50~dB to guarantee that Alice and Bob can agree on a shared
secret in the first place. Figure~\ref{fig:attack} visualizes the success
probability of the attack, as well as the RMSE between Eve's computed key and
Alice's and Bob's key. For values of $\gamma$ that significantly deviate from
1, the attack succeeds, as Eve can recover the correct shared signal, visible
by the high recovery probability of up to 75\% and the low RMSE to Alice's and
Bob's signal. However, for values $\gamma\approx1$, Eve cannot recover the
secret by distinguishing the peaks of the respective signals of Alice and Bob.
In general, we can conclude that the protocol is not vulnerable against this
signal distinguishing attack, as long as $\gamma\approx1$ is given with a
tolerance of up to $0.5<\gamma<2$.

Next, we consider the noisy scenario outlined in Section \ref{sec:noisy}.
Here, Eve receives a signal in the form of $y_{E}=h_{A\rightarrow E}\ast
Q\beta_{A}+\gamma(h_{B\rightarrow E}\ast Q\beta_{B}$), where $h_{A\rightarrow
E}=h_{AB}+n_{A}$ and $h_{B\rightarrow E}=h_{AB}+n_{B}$. Here, $n_{A/B}$ is
AWGN that distorts the respective channel $h_{A\rightarrow E}$ or
$h_{B\rightarrow E}$ to make it different from $h_{AB}$.

\begin{figure}[ptb]
\centering
\begin{subfigure}[b]{0.24\textwidth}
\centering
\includegraphics[width=0.7\linewidth]{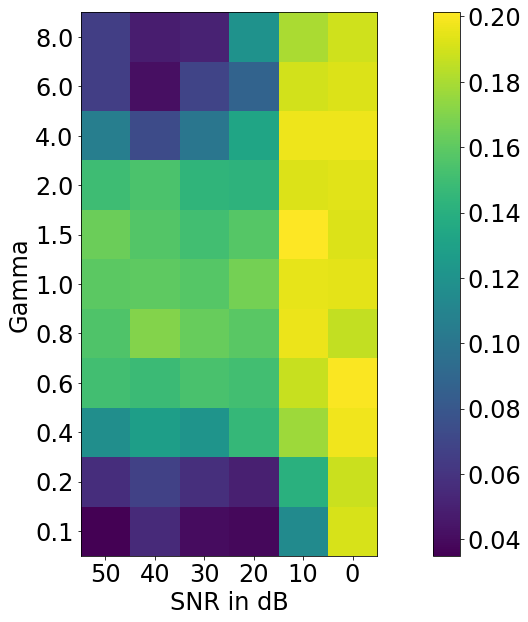}
\caption{One channel is the same.}
\label{fig:diff_channels}
\end{subfigure}
\begin{subfigure}[b]{0.24\textwidth}
\centering
\includegraphics[width=0.7\linewidth]{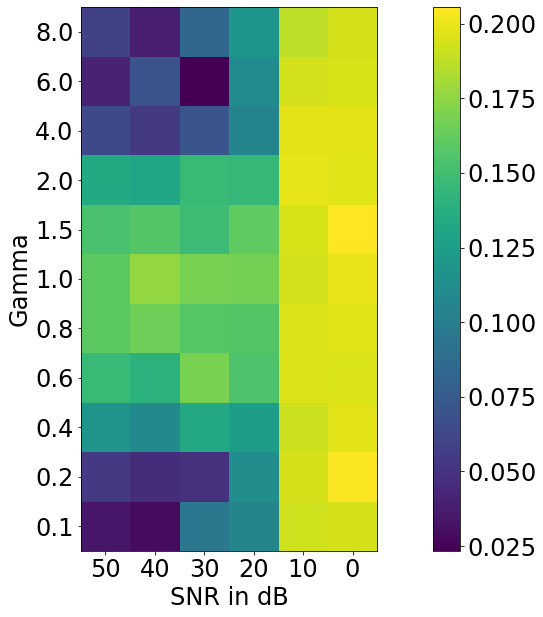}
\caption{Both channels vary.}
\label{fig:diff_channels2}
\end{subfigure}
\caption{MSE of Eve's secret to Alice's and Bob's secret.}%
\label{fig:channel_attack}%
\end{figure}

Figure~\ref{fig:channel_attack} depicts the results of this extended attack
scenario, where in the experiment of Figure~\ref{fig:diff_channels} one of the
two attacker channels was equal to $h_{AB}$ and in
Figure~\ref{fig:diff_channels2} both were different to an extent. The channel
difference was controlled by different noise levels from 50 dB to 0 dB with 0
dB being the case where the noise level equals the channel power. We see that,
as hypothesized in the theoretical section, the recovery of the signals is not
dramatically improved by assuming the more general scenario
\eqref{eq:erweitert} for small $n_{A}$, $n_{B}$. On the other hand, if $n_{A}$
and $n_{B}$ are big (SNR$\leq$10 dB), Eve can hardly obtain any information
about the secret.

In summary, we conclude that Eve can only obtain information about the shared
secret, if she has access to \textit{i)} the same channel as Alice and Bob and
\textit{ii)} can assume that the signal strength of one party is at least
twice as strong opposed to the other party ($\gamma< 0.5$ or $\gamma> 2$),
which is obviously the worst case for Alice and Bob. In a more realistic
scenario, where Eve's channels differ significantly from Alice's and Bob's
channels and their signal strengths are approximately equal, Eve cannot
compute the same shared secret as Alice and Bob do, and therefore fails to
attack this key generation scheme.

\section{Conclusion}

In this paper we presented a novel key generation scheme, which allows Alice
and Bob to agree on a secret key over a physical wireless channel in an
ad hoc fashion, i.e. without further control signals etc. This is achieved by
exchanging their individual own chosen signals in a full duplex transmission and
recovering the signal of the other participant via a blind deconvolution
algorithm. With the recovered signal of the other party and the knowledge of
their own signal, Alice and Bob can compute a common secret. Eve however only
observes a superposition of Alice's and Bob's signals from channel
measurements and we prove analytically and experimentally that she cannot gain
any knowledge about the secret key, even under small deviations of the channels to Eve.
We quantify the remaining entropy given Eve's observations and derive a lower bound
for the actual key size.
The particularity of our scheme is that, other than various other
channel-based key establishment schemes, it does not rely on the channel's
entropy and thus is not constrained by the quality of the channel. We
simulated the FD-BBD protocol and verified that with sufficient channel and
signal sparsity, the signals that Alice and Bob compute separately indeed
result in a shared secret. Lastly, we note that the resulting shared secret
eventually needs to be quantized to bits and fed into a key derivation
function, in order to use it as a key for symmetric cryptographic schemes. We
desist from using information reconciliation techniques to cancel out
remaining noise, as this would leak additional information about the shared
secret to the attacker. Future work includes tests with real channel
measurements, where the influence of noise and channel fading can be
elaborated in more detail.

\subsection*{Acknowledgement}

{AF was supported by the Wallenberg AI, Autonomous Systems and Software
Program (WASP) funded by the Knut and Alice Wallenberg Foundation. GW, DB and
BG were supported by the German Science Foundation (DFG) under grants WU 598/8-1, and
WU 598/12-1, the 6G research cluster (6g-ric.de) supported by the German Ministry of
Education and Research (BMBF) in the program 'Souver\"{a}n. Digital. Vernetzt.',
joint project 6G-RIC, project identification number 16KISK020K, as well as the
BMBF project 'Physical Layer Security for E2E IoT Security' under number 16KIS1473.
}

\bibliographystyle{IEEEtran}
\bibliography{wdh}

\appendix

\subsection{Proof of Theorem \ref{theo:underlying_entropy}}

\label{app:deltaproof}

\begin{proof}
The signals $\beta_{A},\beta_{B}$ can be re-written as%
\[
\beta_{A}=\sum_{i\in\{0,...,k-1\}}\alpha_{i}e_{\sigma\left(  i\right)  },.
\]%
\[
\beta_{B}=\sum_{i\in\{k,...,2k-1\}}\alpha_{i}e_{\bar{\sigma}\left(  i\right)
}%
\]
where $\bar{\sigma}=\Sigma\backslash\sigma$ where we have appropriately mapped
the $\alpha_{A},\alpha_{B}$ to $\alpha\in\lbrack0,1]^{2k}$ for simplicity. Let
$i$ be the $i$-th ordering (permutation) such that $\alpha_{\lbrack0]_{i}}%
\leq...\leq\alpha_{\lbrack2k-1]_{i}}$ which defines the random mapping
\[
\alpha_{0},...,\alpha_{2k-1}\hookrightarrow\{1,...,(2k)!\}
\]
The Maximum Likelihood (ML) estimate selects the permutation which orders the
$\alpha_{i}$'s according to their absolute values. Obviously, for $\gamma=1$,
the ML estimate yields the correct permutation with probability $\frac
{k!k!}{(2k)!}$ which is the random choice without any prior knowledge. For the
case $\gamma<1$ we can calculate
\begin{align*}
\mathbb{P}\left(  \max_{0\leq i\leq k-1}\alpha_{i}>x\right)   &
=1-\mathbb{P}\left(  \alpha_{1}\leq x,...,\alpha_{k}\leq x\right) \\
&  =1-x^{k}%
\end{align*}
and%
\begin{align*}
\mathbb{P}\left(  \min_{k\leq i\leq2k-1}\alpha_{i}\leq x\right)   &
=1-\mathbb{P}\left(  \alpha_{k+1}>x,...,\alpha_{2k}>x\right) \\
&  =1-\left(  1-x\right)  ^{k}.
\end{align*}
Hence, we have%
\begin{align*}
&  \mathbb{P}\left(  \max_{0\leq i\leq k-1}\alpha_{i}\leq\min_{k\leq
i\leq2k-1}\alpha_{i}\right) \\
&  =\int_{x}^{1}\mathbb{P}\left(  \max_{0\leq i\leq k-1}\alpha_{i}>x\right)
f_{X_{\min}}dx\\
&  =\int_{x}^{1}\mathbb{P}\left(  \max_{0\leq i\leq k-1}\alpha_{i}>x\right)
\frac{d\left(  1-\left(  1-x\right)  ^{k}\right)  }{dx}dx\\
&  =k\int_{x}^{1}x^{k}\left(  1-x\right)  ^{k-1}dx.
\end{align*}
Obviously, this yields for $\gamma=0$%
\[
\mathbb{P}\left(  \max_{0\leq i\leq k-1}\alpha_{i}\leq\min_{k\leq i\leq
2k-1}\alpha_{i}\right)  =\frac{k!k!}{(2k)!}=\binom{2k}{k}.
\]
Set $\gamma=1-\delta$ then%
\[
\mathbb{P}\left(  \max_{0\leq i\leq k-1}\alpha_{i}>\frac{x}{1-\delta}\right)
=1-\min\left\{  1,\frac{x}{1-\delta}\right\}  ^{k}%
\]
so that%
\begin{align*}
&  \mathbb{P}\left(  \left(  1-\delta\right)  \max_{0\leq i\leq k-1}\alpha
_{i}\leq\min_{k\leq i\leq2k-1}\alpha_{i}\right) \\
&  =k\int_{0}^{1}\min\left\{  1,\frac{x}{1-\delta}\right\}  ^{k}\left(
1-x\right)  ^{k-1}dx\\
&  =\frac{k}{\left(  1-\delta\right)  ^{k}}\int_{0}^{1-\delta}x^{k}\left(
1-x\right)  ^{k-1}dx\\
&  +k\int_{1-\delta}^{1}\left(  1-x\right)  ^{k-1}dx
\end{align*}
The integral can be calculated as%
\begin{align*}
&  \int x^{k}\left(  1-x\right)  ^{k-1}dx\\
&  =-\frac{1}{k}x^{k}\left(  1-x\right)  ^{k}+\frac{k}{k}\int x^{k-1}\left(
1-x\right)  ^{k}dx\\
&  =-\frac{1}{k}x^{k}\left(  1-x\right)  ^{k}-\frac{k}{k\left(  k+1\right)
}x^{k-1}\left(  1-x\right)  ^{k+1}\\
&  +\frac{k\left(  k-1\right)  }{k\left(  k+1\right)  }\int x^{k-2}\left(
1-x\right)  ^{k+1}dx\\
&  =...\\
&  =-\sum_{i=0}^{k-1}\frac{\left(  k-1\right)  \cdots\left(  k-1-i\right)
}{k\cdots\left(  k+i\right)  }x^{k-i}\left(  1-x\right)  ^{k+i}\\
&  +\frac{k!}{k\left(  k+1\right)  \cdots\left(  2k-1\right)  }\int\left(
1-x\right)  ^{2k-1}dx.
\end{align*}
The probability can be bounded as%
\begin{align*}
&  \mathbb{P}\left(  \left(  1-\delta\right)  \max_{0\leq i\leq k-1}\alpha
_{i}\leq\min_{k\leq i\leq2k-1}\alpha_{i}\right) \\
&  =\frac{k}{\left(  1-\delta\right)  ^{k}}\int_{0}^{1-\delta}x^{k}\left(
1-x\right)  ^{k-1}dx\\
&  +k\int_{1-\delta}^{1}\left(  1-x\right)  ^{k-1}dx\\
&  \leq\binom{2k}{k}^{-1}\frac{1-\delta^{2k}}{\left(  1-\delta\right)  ^{k}%
}+\delta^{k}.
\end{align*}
Since the Shannon entropy is bounded below by the min-entropy, the final
result is obtained as%
\begin{align*}
H(\beta_{A}|\beta_{A}+\gamma\beta_{B})  &  \geq H_{\infty}(\beta_{A}|\beta
_{A}+\left(  1-\delta\right)  \beta_{B})\\
&  \geq-\log\left(  \binom{2k}{k}^{-1}\frac{1-\delta^{2k}}{\left(
1-\delta\right)  ^{k}}+\delta^{k}\right)  .
\end{align*}

\end{proof}

\end{document}